\documentclass[12pt]{article}
\RequirePackage[margin=1in]{geometry}
\usepackage[utf8]{inputenc}
\usepackage[sort&compress]{natbib}
\bibpunct{(}{)}{;}{a}{}{,} 
\usepackage[usenames,dvipsnames]{xcolor}
\usepackage{graphicx, subfigure}
\usepackage{float}
\usepackage{algorithm}
\usepackage{algpseudocode}
\usepackage{etoolbox}\AtBeginEnvironment{algorithmic}{\small}

\usepackage{amsmath} 
\usepackage{bm} 
\usepackage{amssymb} 
\usepackage{bbm} 
\usepackage{mathrsfs}
\usepackage{amsthm}
\usepackage{multirow}
\usepackage{url}

\newtheorem{theorem}{Theorem}
\newtheorem{cor}{Corollary}

\def\YY{\mathbb{Y}}
\def\TT{\mathbb{T}}
\def\SS{\mathbb{S}}

\def\prob{\mathsf{P}}
\def\prior{\mathsf{Q}}

\def\credal{\mathscr{Q}}
\def\cred{\mathscr{C}}

\def\lPi{\underline{\Pi}}
\def\uPi{\overline{\Pi}}

\def\joint{\mathsf{R}}
\def\joints{\mathscr{R}}

\def\unif{\mathsf{Unif}}

\title{Valid and efficient possibilistic structure learning in Gaussian linear regression}
\author{Ryan Martin, Naomi Singer, Jonathan P Williams}
\begin{document}

\maketitle

\begin{abstract}
A crucial step in fitting a regression model to data is determining the model's structure, i.e., the subset of explanatory variables to be included.  However, the uncertainty in this step is often overlooked due to a lack of satisfactory methods. Frequentists have no broadly applicable confidence set constructions for a model's structure, and Bayesian posterior credible sets do not achieve the desired finite-sample coverage. In this paper, we propose an extension of the possibility-theoretic inferential model (IM) framework that offers reliable, data-driven uncertainty quantification about the unknown model structure.  This particular extension allows for the inclusion of incomplete prior information about the unknown structure that facilitates regularization.  We prove that this new, regularized, possibilistic IM's uncertainty quantification is suitably calibrated relative to the set of joint distributions compatible with the data-generating process and assumed partial prior knowledge about the structure.  This implies, among other things, that the derived confidence sets for the unknown model structure attain the nominal coverage probability in finite samples. We provide background and guidance on quantifying prior knowledge in this new context and analyze two benchmark data sets, comparing our results to those obtained by existing methods. 

\smallskip

{\it Keywords:} Choquet integral; inferential models, model selection; partial prior; regularization; uncertainty quantification.
\end{abstract}

\section{Introduction} 
\label{sec:intro}

Classical statistical inference focuses on cases where the model structure is given and the goal is quantifying uncertainty about the structure-specific parameters.  Of course, the model structure is rarely given in applications, so a crucial step in modern data analysis is determining the model structure.  In some cases, the structure itself is of primary interest, and the structure-specific parameters themselves are of only secondary importance.  This includes linear regression \citep[e.g.][]{lockhart2014significance, hansen.econometrca.2011, buhlmann.geer.book}, as is the focus in the present paper, where the model structure corresponds to the subset of explanatory variables included.  This also includes the number of components in a finite mixture model \citep[e.g.,][]{chen.khalili.2008, mclachlan1987, mclachlanpeel}, 
and cluster membership in a heterogeneous population \citep[e.g.][]{gan.etal.2021, schafer2016mapping, gao.etal.2023}.  Regardless, the meaningfulness and credibility of inferences drawn, predictions made, etc.~depend crucially on the structure that is chosen.  This structure choice naturally has uncertainty associated with it, but this uncertainty is often overlooked or under-prioritized in the existing literature and across the social and behavioral sciences, largely due to a lack of satisfactory methods for accommodating this uncertainty.  The present paper offers a novel perspective and framework for provably reliable uncertainty quantification about the model structure, thereby offering new statistical techniques for more reliable model/variable selection and post-selection inference, which are standard aspects of inferences from psychological data \citep{bergh.etal.2021,claeskens2008}.

We start here with a general setup.  Let $Y$ denote the observable data taking values in a sample space $\YY$.  Posit a statistical model $\{\prob_\theta: \theta \in \TT\}$, where $\prob_\theta$ is a probability distribution supported on (measurable subsets of) the sample space $\YY$, depending a parameter $\theta$ taking values in the parameter space $\TT$ and perhaps on other things, such as explanatory variables.  The underlying parameter $\theta$ splits into two components, $\theta \to (s,\theta_s)$, where $s$ is the model structure and $\theta_s$ are the structure-specific parameters.  The example we focus on in the present paper is where $Y = (Y_1,\ldots,Y_n)^\top$ is a $n$-vector of response variables, $x$ is a $n \times p$ matrix of explanatory variables, $\theta$ is a triple $(s,\phi_s,\lambda)$ consisting of the subset of columns of $x$ to be included, the regression coefficients associated with those columns, and an error variance; then the associated Gaussian linear regression model is $\prob_\theta = \mathsf{N}_n(x_s\phi_s, \lambda I_{n \times n})$. Throughout, we follow the convention that the true or unknown or uncertain values $Y$, $\Theta = (S,\Theta_S)$, etc.~will be denoted by upper-case letters and specific values $y$, $\theta=(s,\theta_s)$, etc.~will be denoted by the corresponding lower-case letters.  Then the goal is to learn about the uncertain structure $S$, to quantify uncertainty about $S$, given $Y=y$.   

A key point underscoring the  challenges associated with structure learning, compared to inference about structure-specific parameters, is that the data will {\em always} favor the most complex structure.  Therefore, if one has reason to believe that certain simpler structures are more plausible than complex structures, then the conclusions can only reflect this if said beliefs are formally incorporated into the data analysis.  That is, in one way or another, the complex models must be suitably penalized.  In the existing literature, there are two dominant approaches to carrying out this discounting.  The first defines an objective function $s \mapsto \text{fit}_y(s) + \gamma \, \text{pen}(s)$, where $\text{fit}_y(s)$ is a data-dependent measure of the quality of the $s$-specific model fit on observed data $y$, such as the negative log-likelihood $-\log L_y(s, \hat\theta_s)$, $\text{pen}(s)$ is a penalty that discounts models that are more complex, and $\gamma > 0$ is a tuning parameter that controls the strength of the penalty's influence. Common examples of penalties include the various information criteria, e.g., AIC, BIC, DIC, etc.~\citep{akaike1973, schwarz1978, hurvich.tsai.1989, dic.2002}, and, indirectly, penalties on the parameter vector, e.g., lasso \citep{tibshirani1996regression}, ridge \citep{hoerl1970ridge}, and SCAD \citep{fan2001variable}. Minimizing this objective function results in a data-driven choice of model structure that balances fit and complexity. Under certain conditions, the minimizer can consistently estimate $S$ \cite[e.g.][]{bozdogan1987model, ding2018model, shao1993linear}. These approaches are focused on estimating $S$, but what about quantifying uncertainty about $S$?  Some suggest exponentiating the negative of the objective function and normalizing to construct a data-driven probability distribution for $S$, but no guidance is provided on how small such a ``probability that $S=s$'' must be to consider structure $s$ refuted by the data.  

Second, under the Bayesian framework, marginal inference on $S$ is carried out by integration with respect to a prior. Given priors $(\Theta_s \mid S=s) \sim \prior_s$ for each structure $s$, the Bayesian evaluates the marginal likelihoods $m_y(s) = \int L_y(S, \theta_s) \, \prior_s(d\theta_s)$.  One way to use the marginal likelihood is to make pairwise comparisons between two candidate structures, say, $s_1$ and $s_2$ using the Bayes factor, $m_y(s_1)/m_y(s_2)$: if this ratio is large, then conclude that the data favors the model structure $s_1$ over $s_2$. Another way introduces a (marginal) prior distribution $S \sim \prior$ and uses Bayes's theorem to construct a posterior mass function for $S$, i.e., $\prior_y(s) \propto m_y(s) \, \prior(s)$.
This provides probabilistic uncertainty quantification about $S$, but its meaningfulness and properties depend on the priors $\prior_s$ for the $s$-specific parameters. If no prior information exists, the literature generally suggests using default ``flat" priors, but this is not allowed in structure learning \citep[e.g.,][]{berger2001, llorente2022}.  Therefore, the priors $\prior_s$ must be proper and, hence, informative. If no genuine prior information about the structure-specific parameters is available, then the chosen prior distributions are artificial, and their informativeness evidently negatively affects the interpretation and properties of the Bayesian solution. Even if prior information about the structure-specific parameters is available, rarely is it so informative to determine a single (precise) prior distribution from the set of all possible prior distributions. In any case, the Bayesian posterior distributions for $S$, given $y$, come with no calibration guarantees, so, e.g., the corresponding credible sets need not be confidence sets.  To our knowledge, all that has been established is a type of posterior consistency that ensures the credible set contains the true $S$ with probability converging to 1 as $n \to \infty$ \citep[e.g.,][]{casella2009consistency, chae.etal.glm, castillo2015bayesian}. 

For practitioners, this dichotomy between classical/frequentist and Bayesian solutions is unsatisfactory and leads to confusion about when to employ one regime and not the other and how to compare them \citep[e.g.,][]{nelson2020quantifying}. Further, an awkward tension becomes clear: either incorporate no prior information and sacrifice probabilistic uncertainty quantification altogether or introduce proper prior distributions in a probabilistic Bayesian analysis and sacrifice calibration/credibility.
Our view is that a statistical framework should not dictate the assumptions that practitioners make but instead should smoothly accommodate the assumptions that practitioners are willing to make. The partial prior inferential model (IM) framework introduced in \citet{martin2022valid1, martin2022valid2, martin2023valid3} achieves this goal, but has only been developed for the case of uncertainty quantification about structure-specific model parameters, namely, $\Theta_s$. The present paper extends this framework to provide provably reliable uncertainty quantification about the model structure, along with guidance on how prior information about model structure can be incorporated.  

Section~\ref{sec:background} provides necessary background on possibility theory and a summary of IMs and their properties \citep[e.g.,][]{martin2022valid1, martin2022valid2, martin2023valid3} in the case when prior information is vacuous.  Section~\ref{sec:add_ms} begins with formalizing the notion of partial prior knowledge: too much information to ignore, but not enough to justify a full set of prior distributions as Bayesian methods require, e.g., knowing the model structure must be sparse but having no information about which parameters are non-zero or the particular non-zero values. The remainder of Section~\ref{sec:add_ms} discusses how this partial prior information is incorporated into the IM so that reliable uncertainty quantification about model structure is achieved, and provides an empirical demonstration of the IM's validity.  So as not to conflate the partial-prior IM developments with their practical implementations/approximations, computational details are relegated to Appendix~\ref{A:computation}.  As this is a general methodology paper, rather than one focused on a particular application, we cannot suggest any specific choice of partial prior for practitioners to use.  But we can offer some general guidance, and we do so in Section~\ref{sec:prior_contour}.  Section~\ref{sec:rda} applies these guidelines to analyze two benchmark data sets in the literature, and Section~\ref{sec:conclusion} offers some concluding remarks.

\section{Background} 
\label{sec:background}

\subsection{Possibility theory}
\label{SS:possibility}

Possibility theory \citep[e.g.,][]{dubois.prade.book} is one of the simplest imprecise probability models, closely related to fuzzy set theory \citep[e.g.,][]{klir1999fuzzy, zadeh1978fuzzy}, and can be viewed as a special case of Dempster--Shafer theory \citep[e.g.,][]{dempster2008, shafer1976, yagerliu2008}.  The role of possibility theory and possibilistic reasoning in statistical applications is described in \citet{dubois2006} and the references therein.  

A possibility measure supported on a space $\TT$, intended to quantify uncertainty about $\Theta$ in $\TT$, is defined in terms of a real-valued function $\pi: \TT \to [0,1]$, called a {\em possibility contour} with the property that $\sup_{\theta \in \TT} \pi(\theta)=1$.  The latter supremum-equals-1 condition on the contour is the possibility-theoretic analogue of the familiar condition that a probability density function integrates to 1.  This observation reveals the key mathematical difference between possibility and probability theory: optimization is to possibility theory what integration is to probability theory.  Indeed, the {\em possibility measure} $\uPi$ corresponding to the contour $\pi$ is defined as 
\begin{equation}
\label{eq:contour.to.poss}
\uPi(H) = \sup_{\theta \in H} \pi(\theta), \quad H \subseteq \TT. 
\end{equation}
The corresponding {\em necessity measure} $\lPi$, which complements the possibility measure, is defined as $\lPi(H) = 1 - \uPi(H^c)$, but we will not have much need for this below.  

Before we move on to discuss important details about how possibility measures are interpreted and how they are relevant for probabilistic reasoning and uncertainty quantification, we give two quick illustrations where visualization helps to make the theory concrete.  While it is not the only way, one general way to construct a possibility contour is via the so-called {\em probability-to-possibility transform} \citep[e.g.,][]{dubois.etal.2004, hose2020data, hose2021universal}.  This construction starts with a random variable $T$ on $\TT$ that has a probability density/mass function $f$.  Then the probability-to-possibility transform is 
\[ \pi(\theta) = \prob\{ f(T) \leq f(\theta) \}, \quad \theta \in \TT. \]
It is easy to see that $\pi$ is a contour, since it attains value 1 at the mode of $f$.  Figure~\ref{fig:toy} plots the possibility contour obtained by this construction for two cases---$T$ is gamma and $T$ is binomial---and shows computation of $\uPi(H)$ for $H=[2.5,5]$ via optimization.   

\begin{figure}[t]
\begin{center}
\subfigure[$T \sim {\sf Gamma}(2,1)$]{\scalebox{0.6}{\includegraphics{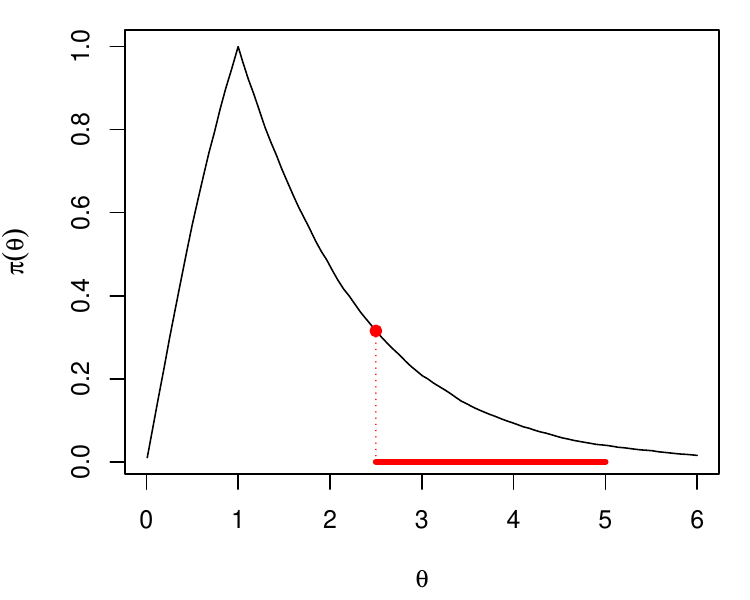}}}
\subfigure[$T \sim {\sf Bin}(6, 0.2)$]{\scalebox{0.6}{\includegraphics{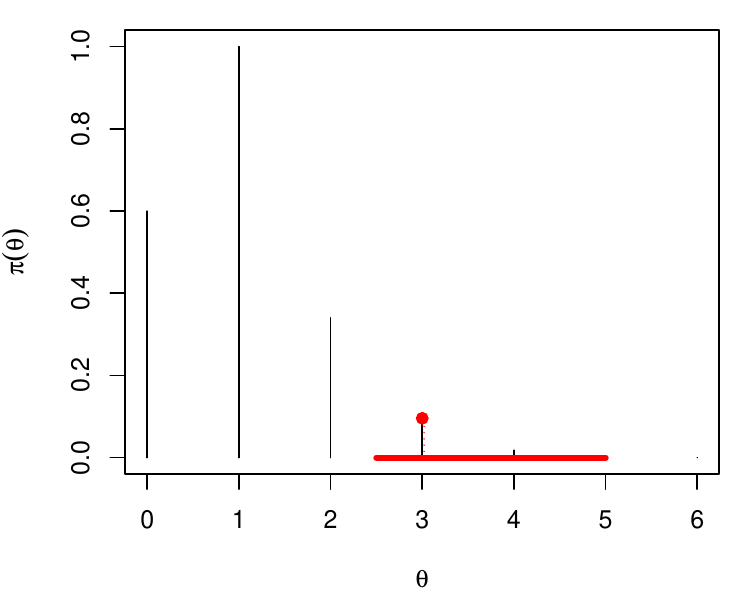}}}
\end{center}
\caption{Plots of two possibility contours based on the probability-to-possibility transform described in the text.  Both plots show the hypothesis $H=[2.5,5]$ in red, with the corresponding $\uPi(H)$ defined by optimization of the contour over $H$.}
\label{fig:toy}
\end{figure}

Since possibility theory is like probability theory but with integration replaced by optimization, it is clear how to carry out marginalization.  That is, if $\Phi = g(\Theta)$ is a given function of the uncertain $\Theta$, then a marginal possibility contour for $\Phi$ is given by 
\begin{equation}
\label{eq:extension0}
\pi^\text{\sc ex}(\phi) = \sup_{\theta: g(\theta) = \phi} \pi(\theta), \quad \phi \in g(\TT). 
\end{equation}
This is an instantiation of the {\em extension principle} coined in \citet{zadeh1975a}, where inference about one unknown is extended to inference about another related unknown; see below.  

Possibility theory is no less rigorous than probability theory; the two are just different and so the former, which is less familiar, requires explanation.  First, note that a possibility contour $\pi$ determines a set of probability distributions, called the {\em credal set} associated with $\pi$ \citep[e.g.,][]{augustin2014introduction, huber1981, walley1991, levi1980}, defined as 
\begin{equation}
\label{eq:credal}
\cred(\pi) = \{\prob \in \text{probs}(\TT): \text{$\prob(H) \leq \uPi(H)$ for all measurable $H$}\}, 
\end{equation}
where $\text{probs}(\TT)$ is the collection of all probability measures supported on (a $\sigma$-algebra of subsets of) $\TT$.  The supremum-equals-1 normalization ensures that this set is non-empty and, therefore, that $\uPi$ can be interpreted as a coherent upper probability \citep[e.g.,][Ch.~7]{lower.previsions.book}.  This is how possibility measures are interpreted and how they facilitate probabilistic reasoning: if $\uPi(H)$ is small, then $\prob(H)$ is likewise small for all $\prob$ in the credal set $\cred(\pi)$.  It is worth mentioning that the credal set determined by $\pi$ has other convenient and insightful characterizations.  For instance, \citet{cuoso.etal.2001} show that the random variable $\pi(\Theta)$ behaves like a p-value with respect to all $\prob$ in $\cred(\pi)$, i.e., 
\[ \sup_{\prob \in \cred(\pi)} \prob\{ \pi(\Theta) \leq \alpha \} \leq \alpha, \quad \alpha \in [0,1]. \]

We end this review with two notable things that possibility theory can do that probability theory cannot.  The first is that possibility theory can easily model ignorance: just take $\pi(\theta) \equiv 1$.  The result is $\uPi(H) = 1$ for all (non-empty) $H \subseteq \TT$ and, therefore, the credal set $\cred(\pi)$ contains all the probability distributions supported on $\TT$; that is, if we are ignorant about $\Theta$, then no probability distribution can be excluded from consideration when quantifying uncertainty about it.  It is obvious that no single probability distribution, however diffuse, can do the job of the set of all probability distributions; in particular, complete ignorance includes non-diffuse distributions.  The second, which is related to the first, is that we can literally extend our possibilistic uncertainty quantification about $\Theta$ to other quantities that are not fully determined by $\Theta$.  Suppose, for example, that $\Phi = (\Theta, \Lambda)$ and we are fully ignorant about $\Lambda$ and how it might be related to $\Theta$.  Then we can quantify uncertainty about $\Phi$ using the extended possibility contour 
\[ \pi^+(\theta, \lambda) = \pi(\theta), \quad \text{for all $(\theta,\lambda)$}. \]
In this way, if we marginalize to $\Theta$ according to the rule \eqref{eq:extension0}, then we recover $\pi(\theta)$ and, similarly, if we marginalize to $\Lambda$, then we get a constant contour corresponding to ignorance.  The point is that this instantiation of Zadeh's extension principle preserves what is known about $\Theta$ and $\Lambda$ individually, but is ignorant concerning their relationship.

\subsection{Possibilistic IMs}
\label{SS:vacuous.im}

\subsubsection{Motivation}

The IM framework, first developed in \citet{imbasics, imbook}, aimed to fill an awkward gap between the Bayesian and frequentist mainstreams where, on the one hand, probabilism is paramount and, on the other hand, sampling-based calibration is cardinal.  An early conjecture was that probabilism and calibration are fundamentally incompatible, later confirmed by the {\em false confidence theorem} \citep{balch2019satellite}: there is no construction of a data-dependent probability distribution quantifying uncertainty about unknown model parameters that is provably calibrated across all hypotheses; see, also, \citet{martin.nonadditive, martin.belief2024} and the concrete example in Section~\ref{SSS:toy.example} below.  This explains why Fisher's fiducial program fell short and why default-prior Bayes is not fully satisfactory.  That the root cause of false confidence is additivity motivates consideration of data-dependent {\em imprecise probabilities} for uncertainty quantification in statistical applications.  

To our knowledge, the original IM formulation---``one of the original statistical innovations of the 2010s'' \citep{cui.hannig.im}---was the first to be provably safe from false confidence.  That formulation focused on imprecise-probabilistic uncertainty quantification modeled via random sets and belief functions, but there are certain challenges one faces in putting this approach into practice.  A different and arguably simpler formulation emerged in \citet{plausfn, gim}, but only recently did the possibility-theoretic foundations behind it start to take shape.  Our formulation below closely follows that in  \cite{martin2022valid2}.

\subsubsection{Construction}

Let $\{\prob_\theta: \theta \in \TT\}$ be a posited statistical model for the observable data $Y$.  In particular, assume that $Y \sim \prob_\Theta$, where $\Theta \in \TT$ represents the true but unknown/uncertain value of the parameter.  For the moment, we assume that no prior information relevant to $\Theta$ is available, i.e., we assume {\em a priori} ignorance about $\Theta$.  Write $\theta \mapsto L_y(\theta)$ for the likelihood function.  The goal is to reliably quantify uncertainty about the unknown $\Theta$ using only the observed data and information contained in the model/likelihood.  

The possibilistic IM formulation in \citet{martin2022valid2} proceeds in two steps: {\em ranking} and {\em validification}.  The ranking step amounts to specifying a $y$-dependent ranking or preference ordering on the candidate $\theta$ values that takes into account information about the model.  One natural choice of ranking is that based on the relative likelihood 
\begin{equation}
\label{eq:rellik}
\eta(y,\theta) = \frac{L_y(\theta)}{\sup_\vartheta L_y(\vartheta)}, \quad \theta \in \TT.
\end{equation}
This relative likelihood-based ranking is the one that we will adopt here, modulo some simple but important profiling and penalization adjustments.  Next, the validification step proceeds by applying a variation on the probability-to-possibility transform described earlier:
\begin{equation}
\label{eq:contour0}
\pi_y(\theta) = \prob_\theta\{ \eta(Y,\theta) \leq \eta(y,\theta) \}, \quad \theta \in \TT,
\end{equation}
where the probability $\prob_\theta$ determines the distribution of $Y$ that appears in the relative likelihood function $\eta(Y,\theta)$.  This resembles the p-value corresponding to a likelihood ratio test, which is not surprising given the close connection between possibility theory and p-values.  Since the model parameters are often continuous-valued, plots of the possibilistic IM's contour function $\theta \mapsto \pi_y(\theta)$ tend to look similar to that in Figure~\ref{fig:toy}(a).  Given the contour $\pi_y$, the possibilistic IM construction is completed by defining the corresponding possibility measure $\uPi_y$ via optimization; $\uPi_y(H) = \sup_{\theta \in H} \pi_y(\theta)$, just as in \eqref{eq:contour.to.poss}.  \citet{martin2022valid1} showed that the members of the possibilistic IM's credal set $\cred(\pi_y)$ have a {\em confidence distribution} interpretation.  So, if $\uPi_y(H)$ is small, then all of the confidence distributions assign low credence to the truthfulness of the hypothesis ``$\Theta \in H$,'' and this is what makes the IM's output meaningful.  Such sampling-based reliability properties are discussed in Section~\ref{SSS:im.vac.properties}. 

It is almost always the case that interest is in some feature $\Phi=g(\Theta)$ of the full parameter $\Theta$.  There are (at least) two ways one could construct a marginal IM for $\Phi$.  The first is a direct application of the extension principle \eqref{eq:extension0} discussed above; write this as $\phi \mapsto \pi_y^\text{\sc ex}(\phi)$.  The other tailors the IM construction specifically to the quantity $\Phi$ of interest by recasting the ranking function as a relative profile likelihood 
\begin{equation}
\label{eq:relproflik}
\eta^\text{\sc pr}(y,\phi) = \frac{\sup_{\theta: g(\theta)=\phi} L_y(\theta)}{\sup_{\theta \in \TT} L_y(\theta)}, \quad \phi \in g(\TT). 
\end{equation}
Then carry out the validification step similar to above:
\[ \pi_y^\text{\sc pr}(\phi) = \sup_{\theta: g(\theta) = \phi} \prob_\theta\{ \eta^\text{\sc pr}(Y,\phi) \leq \eta^\text{\sc pr}(y,\phi) \}, \quad \phi \in g(\TT). \]
The two functions, $\pi_y^\text{\sc ex}$ and $\pi_y^\text{\sc pr}$, are different and one generally cannot conclude that one is ``better'' than the other in a given application, but empirical evidence suggests \citep{martin2023valid3} and theoretical results confirm \citep{imbvm.ext} that the simpler extension-based marginalization is generally less efficient than profile-based marginalization.

\subsubsection{Properties}
\label{SSS:im.vac.properties}

The possibilistic IM's key property is a simple one: 
\begin{equation}
\label{eq:valid0}
\sup_{\theta \in \TT} \prob_\theta\{ \pi_Y(\theta) \leq \alpha \} \leq \alpha, \quad \alpha \in [0,1].
\end{equation}
Again, this closely resembles the familiar property satisfied by p-values, but we are considering here a much broader scope.  The point is that IM's possibility contour function evaluated at the true value $\Theta$, no matter what the true value happens to be, is calibrated in the sense that $\pi_Y(\Theta)$ is stochastically no smaller than a $\unif(0,1)$ random variable.  The cardinal calibration property \eqref{eq:valid0} has several important consequences.  First, if we define 
\[ C_\alpha(y) = \{ \theta \in \TT: \pi_y(\theta) > \alpha\}, \quad \alpha \in [0,1], \]
then $C_\alpha(Y)$ is a $100(1-\alpha)$\% frequentist confidence set in the sense that 
\[ \sup_{\theta \in \TT} \prob_\theta\{ C_\alpha(Y) \not\ni \theta\} \leq \alpha, \quad \alpha \in [0,1]. \]
Second, if we take any fixed $H \subseteq \TT$, then 
\begin{equation}
\label{eq:valid.alt0}
\sup_{\theta \in H} \prob_\theta\{ \uPi_Y(H) \leq \alpha \} \leq \alpha, \quad \alpha \in [0,1]. 
\end{equation}
That is, if $H$ is a true hypothesis in the sense that $H \ni \Theta$, then it is a $\prob_\Theta$-rare event that the IM assigns small upper probability to $H$, and ``rare'' and ``small'' are explicitly linked.  The result in \eqref{eq:valid.alt0} can be rewritten in terms of the IM's lower probability or necessity as follows:
\[ \sup_{\theta \not\in H} \prob_\theta\{ \lPi_Y(H) \geq 1-\alpha \} \leq \alpha, \quad \alpha \in [0,1]. \]
If ``false confidence'' means large support for false hypotheses, then this implies that the IM is able to control its false confidence rate whereas data-dependent probabilities generally cannot.  It turns out that property \eqref{eq:valid.alt0} not only holds for fixed $H$, but also {\em uniformly} in $H$; see \citet{cella.martin.probing} and Section~\ref{SS:valid} below.  

The kind of validity properties described above are easy to achieve simply by being overly cautious.  But despite the inherent imprecision in the possibilistic IM's output, caution is not the reason why validity is satisfied.  This is obvious when the framework is applied to a real problem (see Section~\ref{SSS:toy.example}), but this can also be proved theoretically for large samples under the usual regularity conditions.  Indeed, \citet{imbvm.ext} prove a possibility-theoretic version of the celebrated Bernstein--von Mises theorem; the one that establishes that Bayesian \citep[e.g.,][Ch.~10.2]{vaart1998} and generalized fiducial inference \citep[e.g.,][]{hannig.review} are asymptotically valid and efficient, where efficiency means that the ``posterior variance'' matches the Cram\'er--Rao lower bound.  Interested readers are referred to \citet{imbvm.ext} for details and to \citet{immc, reimagined} for some related developments.

\subsubsection{Example}
\label{SSS:toy.example}

Consider the simple (fixed structure) Gaussian linear regression model 
\begin{equation} \label{eq:lm}
    (Y \mid x, \theta) \sim \prob_{\phi,\lambda} := {\sf N}_n(x\phi, \lambda I_{n \times n}), \quad \theta = (\phi, \lambda),
\end{equation}
where $Y$ is an $n$-vector of response variables, $x$ is an $n \times (p + 1)$ matrix whose first column is filled with ones and the remaining columns correspond to the measured explanatory variables, $\phi^\top = (\phi_0, \phi_1^\top)$ is a $(p+1)$-vector of regression coefficients where $\phi_0$ corresponds to the intercept and $\phi_1^\top$ corresponds to the $p$ parameters associated with the covariates, and the scalar $\lambda$ is an error variance.  It is easy to check that, modulo irrelevant constants, the log-relative likelihood is given by 
\[ \log \eta(y,\phi,\lambda) = n\log(\hat\lambda_y / \lambda) - \|y - x\phi\|_2^2 / \lambda, \]
where $\hat\lambda_y = n^{-1} \|y - x \hat\phi_y\|_2^2$ and $\hat\phi_y = (x^\top x)^{-1} x^\top y$ are the maximum likelihood estimators of $\lambda$ and $\phi$, respectively.  Since $\eta(Y,\phi,\lambda)$ is a pivot with respect to $\prob_{\phi,\lambda}$, it is easy to carry out the validification step (numerically) to get the possibilistic IM contour $\pi_y(\phi,\lambda)$ for each $(\phi,\lambda)$; there is, unfortunately, no simple closed-form expression to show here.  The more common situation in applications, however, is where the error variance is a nuisance parameter and the focus is solely on the unknown regression coefficients, $\Phi$.  Then we would get a direct marginal IM for $\Phi$ via profiling.  Indeed, the log-relative profile likelihood is 
\[ \log\eta^\text{\sc pr}(y,\phi) = -n \log\{ \hat\lambda_y(\phi) / \hat\lambda_y \}, \]
where $\hat\lambda_y$ is as before, and $\hat\lambda_y(\phi) = n^{-1} \|y - x\phi\|_2^2$ is the maximum likelihood estimator of the error variance when $\Phi=\phi$ is assumed known.  Accordingly, the profile-based marginal IM for $\Phi$ has possibility contour 
\[
\pi_y^\text{\sc pr}(\phi) = \sup_\lambda \prob_{\phi,\lambda} \Bigl\{ \frac{\hat\lambda_Y(\phi)}{\hat\lambda_Y} \geq \frac{\hat\lambda_y(\phi)}{\hat\lambda_y} \Bigr\} = 1 - F\Bigl( \frac{\|x\hat\phi_y - x\phi\|_2^2}{\|y - x\hat\phi_y\|_2^2} \Bigr), \quad \phi \in \mathbb{R}^{p+1},
\]
where $F$ is the ${\sf F}_{p+1, n-p-1}$ distribution function.  Consequently, the level sets are 
\[ C_\alpha(y) = \{ \phi \in \mathbb{R}^{p+1}: \pi_y^\text{\sc pr}(\phi) > \alpha\} = \Bigl\{ \phi \in \mathbb{R}^{p+1}: \frac{\|x\hat\phi_Y - x\phi\|_2^2}{\|Y - x\hat\phi_Y\|_2^2} < F^{-1}(1-\alpha) \Bigr\}, \]
which agrees with the textbook $100(1-\alpha)$\% confidence set for the regression coefficients.  

That there are similarities between the (marginal) possibilistic IM solution and the textbook frequentist solution is not a shortcoming of the former; in fact, the latter is the ``textbook solution'' because it is the best in a certain sense, so it would be more concerning if the former differed from the latter.  This highlights that the asymptotic efficiency enjoyed by the possibilistic IM in general is, in some cases at least, exact efficiency.  It is important to keep in mind that the IM solution is more than just a confidence set---it offers fully conditional possibilistic uncertainty quantification that is provably reliable without sacrificing efficiency.  As a quick demonstration that this is nontrivial, consider a standard, default-prior Bayes solution.  In this case, the posterior distribution $\prior_y^\text{\sc bayes}$ is the standard normal--inverse gamma, which is easy to simulate, so posterior inference is rather straightforward.  To see how false confidence can creep in, consider the case where $p=1$.
A relevant quantity might be the ratio $-\Phi_0/\Phi_1$, which represents the root of the mean response function; this is like ``inverse regression.''  Suppose the true coefficient values are $\Phi_0=0.3$ and $\Phi_1=0.1$, with error variance $\lambda=1$, so that the true value of the root is $-3$, and then the hypothesis $H = \{\phi: -\phi_0 / \phi_1 > -1\}$ is false. Figure~\ref{fig:fc} shows a plot of the empirical distribution function of the random variable $\prior_Y^\text{\sc bayes}(H)$ based on 1000 data sets $Y$ of size $n=25$.  Notice that this distribution function is slowly increasing at first and more rapidly towards the end.  This means that the sampling distribution of $\prior_Y^\text{\sc bayes}(H)$ is concentrated around relatively large values, e.g., $\prior_Y^\text{\sc bayes}(H)$ is bigger than 0.6 about 90\% of the time.  But remember that this hypothesis $H$ is {\em false}, so this tendency to assign high posterior probability---or confidence---to a false hypothesis creates a risk of systematically misleading conclusions.  

\begin{figure}[t]
\begin{center}
\scalebox{0.7}{\includegraphics{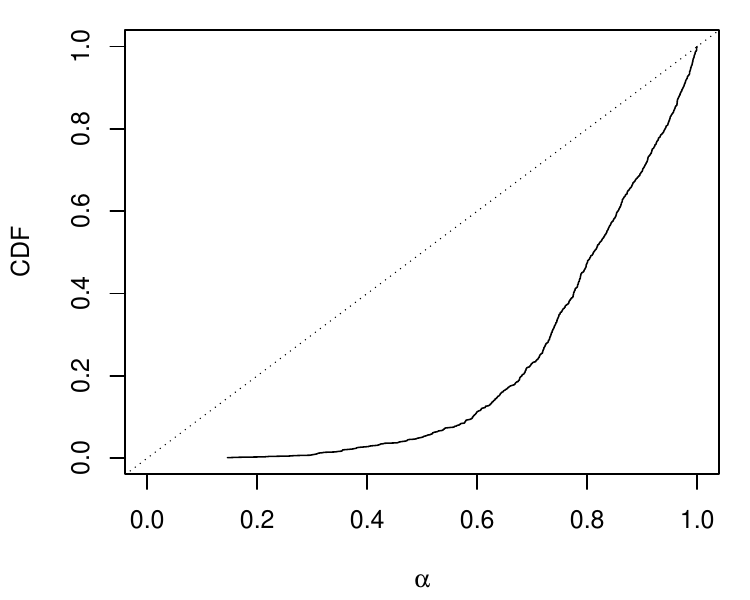}}
\end{center}
\caption{The (empirical) distribution function $\alpha \mapsto \prob\{ \prior_Y^\text{\sc bayes}(H) \leq \alpha\}$ of the Bayes posterior probability assigned to the {\em false} hypothesis $H$ defined in the main text.}
\label{fig:fc}
\end{figure}

\section{Possibilistic IM structure learning} \label{sec:add_ms}

\subsection{Partial prior information}
\label{sec:ppim_constr}

Section~\ref{sec:intro} reviewed the mainstream approaches to structure learning.  What these approaches have in common is that they penalize against overly complex model structures.  But what is the rationale behind penalizing complex structures?  It must be that there is an underlying belief---either from genuine knowledge about the problem at hand or by appeal to general parsimony principles like Occam's razor---that the unknown true model structure is not complex.  Either way, this is {\em prior information} and it is both fair and advantageous to treat it as such.  But this perspective creates an awkward tension: the only mainstream approach available that can incorporate prior information is Bayesian, and adopting a Bayesian perspective obligates the data analyst to introduce prior distributions for all of the unknowns, not just for the model structure which is believed to be not-complex.  Indeed, Manski's law of decreasing credibility \citep{manski2003partial} says that the credibility of inference decreases with the strength of assumptions made, so introducing (necessarily ``informative'') prior distributions for the structure-specific parameters amounts to assuming more than can be justified, hence the inferences lose credibility.  Our proposal offers a sort of middle ground, one where only the available-but-vague prior information about the model structure can be incorporated into the analysis, and, by interpreting this addition as part of the statistical model itself, the data analyst can enjoy both the efficiency gains afforded by penalizing overly complex models and the desirable frequentist-like reliability guarantees in structure learning.  

Our starting point is the observation that statements or assumptions, e.g., ``the model structure is not too complex,'' amount to incomplete prior information about the full parameter $\Theta$.  As a simple example, suppose, like in the classical cases mentioned at the beginning of Section~\ref{sec:intro}, that the model structure $S$ is {\em known} to be, say, $s$.  This says absolutely nothing about the structure-specific parameter $\Theta_s$ so our prior information in that direction is fully vacuous.  That is, our {\em a priori} uncertainty about $\Theta_s$ would be quantified by saying that the ``prior probability that $\Theta_s \in H$ is in $[0,1]$'' for any $H \subset \TT_s$. This knowledge state can be described mathematically via a credal set that contains {\em all prior distributions} for $\Theta_s$.  

Under our present assumption that ``$S$ is not too complex,'' partial or incomplete prior information about $S$ determines a set $\credal$ of prior distributions for $\Theta=(S,\Theta_S)$.  The corresponding credal set $\credal$ consists of mixtures of the form 
$ \sum_{s \in \SS} \prior_S(s) \, \prior_{\Theta_s}(\cdot)$, 
where $\prior_{\Theta_s}$ can be any prior distribution for $\Theta_s$ supported on $\TT_s$ and the weights $\prior_S(s)$ assigned to the different model structures must be compatible with the available [partial] prior information about the uncertain model structure $S$.  Further, from a possibilistic construction, all that needs to be specified is a possibility contour function $q$ for $\Theta$, which is much simpler than directly specifying a credal set.  Of course, there is still a credal set, i.e., $\credal=\cred(q)$ as defined in \eqref{eq:credal}, but this admits a special structure that helps simplify the relevant computations below.  

For reasons that will be made clear below, the choice of prior contour $q$---or, more generally, the prior credal set $\credal$---should be treated as part of the model-building process and, in that sense, it is up to the user to determine what best suits his/her application.  Some guidance on how the partial prior assessment can be made is given in Section~\ref{sec:prior_contour} below but, for the remainder of this section, we will take the partial prior contour $q$ as given, focusing for now on how $q$ can be used and what properties can be established if used in this way.  All that will be assumed about $q$ below is that $q(s,\theta_{s})$ only depends on the structure $S=s$ inherent in the given parameter $\theta$; i.e., $q$ is constant on $\theta$'s having a common structure. 

There is one last technical detail to present before getting into the partial-prior IM construction.  In a Bayesian analysis, when one posits a prior distribution for the parameter, this, together with the statistical model---interpreted as a data-given-parameter conditional distribution---determines a joint distribution for the data--parameter pair $(Y,\Theta)$.  Likewise, when one posits a credal set $\credal$ of prior distributions, this determines a set of joint distributions for $(Y,\Theta)$.  Write $\joint_\prior$ for the joint distribution determined by a prior $\prior \in \credal$, and define the set of such joint distributions as $\joints = \{\joint_\prior: \prior \in \credal\}$.  So, if $E \subseteq \YY \times \TT$ is an event concerning the pair $(Y,\Theta)$, then the upper probability of $E$ is defined as 
\begin{equation}
\label{eq:upper.joint}
\overline\joint(E) = \sup_{\prior \in \credal} \joint_\prior(E) = \sup_{\prior \in \credal} \iint 1\{ (y,\theta) \in E \} \, \prob_\theta(dy) \, \prior(d\theta).
\end{equation}
We will not need to work directly with any of these joint distributions or the above supremum over $\credal$; this is just notation to help with the explanations that follow.  Specifically, we show below that {\em all relevant computations can be carried out using the prior contour $q$}.  This simplicity is one of the benefits of the possibilistic formulation!

\subsection{Partial-prior IM construction}

Here we follow the partial-prior IM construction presented in \citet{martin2022valid2}.  For the sake of completeness, we walk through all the steps below but focus on the case where the model structure $S$ is of primary interest.  Just as in the vacuous-prior case reviewed in Section~\ref{SS:vacuous.im}, we proceed with the two basic steps: ranking and validification.  For the first, analogous to the relative profile likelihood used in \eqref{eq:relproflik}, here we propose a penalized version thereof, where the marginalization is over the structure-specific parameters and the penalty is $q(s,\theta_s)$:
\[ \eta_q(y,s) = \frac{\max_{\theta_s} L_y(s,\theta_s) \, q(s,\theta_s)}{\max_{r,\theta_r} L_y(r,\theta_r) \, q(r, \theta_r)}, \quad s \in \SS. \]
It would often be the case that one has vacuous prior knowledge about the $s$-specific structure parameters, as we are assuming here; in such cases, $q(s,\theta_s)$ is constant in $\theta_s$ for a given $s$ and we can write the contour as $q(s)$.  Then the relative penalized profile likelihood simplifies to 
\[ \eta_q(y,s) = \frac{L_y(s,\hat\theta_s) \, q(s)}{\max_{r \in \SS} L_y(r,\hat\theta_r) \, q(r)}, \quad s \in \SS, \]
where $\hat\theta_s$ is the maximum likelihood estimator of the $s$-specific model parameter.  In the Gaussian linear regression case that we have in mind here, the maximum likelihood terms are available in closed form: modulo irrelevant constants, this is 
\[ L_y(s,\hat\theta_s) = \bigl( \hat\lambda_{y,s}^2 \bigr)^{-n/2}, \quad s \in \SS, \]
where $\hat\lambda_{y,s}^2 = n^{-1} \|y - (x_s^\top x_s)^{-1} x_s^\top y\|^2$ is a scaled residual sum of squares associated with the $s$-specific model.  The presence of the $q$ penalty terms and the fact that the denominator maximizes over model structures breaks the aforementioned connection with the F distribution in the fixed-$s$ case.  But the closed-form expressions for both the likelihood and penalty terms means that we can, at least for low- or moderate-dimension problems, easily evaluate this relative penalized profile likelihood using, e.g., the {\tt leaps} package in R \citep{leaps}.  

The relative penalized profile likelihood is surely a reasonable ranking of the model structures and, especially in the latter case above, there is a clear parallel to the balance-model-fit-and-complexity perspective described in Section~\ref{sec:intro}.  This provides a ranking equivalent to that based on the following objective function: 
$
s \mapsto -\log L_y(s,\hat\theta_s) - \log q(s). 
$
Naturally, then, the ``best'' model structure would be
\begin{equation}
\label{eq:s.hat}
\hat s_y = \arg\min_s \bigl\{ -\log L_y(s,\hat\theta_s) - \log q(s) \bigr\} = \arg\max_s \bigl\{ L_y(s,\hat\theta_s) \, q(s) \bigr\}. 
\end{equation}
The specific functional form that $q$ might take will be discussed further in Section~\ref{sec:prior_contour} but a perfectly reasonable choice is $q(s) = e^{-\gamma|s|}$, where $|s|$ is the complexity of structure $s$ and $\gamma > 0$ is a hyperparameter.  With a choice like this, there are clear connections to the various information criterion-based penalties that are common in the literature.  

To complete the partial-prior IM construction, we next carry out the validification step.  This too is perfectly analogous to the validification step presented in Section~\ref{SS:vacuous.im}, but the mathematical details are less familiar here so it might be more difficult to see at first glance.  Recalling that the mapping $S(\theta)$ reads off the structure inherent in the generic parameter $\theta$, the (marginal) partial-prior IM for $S$, given $Y=y$, has contour function $\pi_y(s)$ defined as 
\begin{equation}
\label{eq:pi.s}
\pi_y(s) = \overline\joint\bigl\{ \eta_q(Y,S(\Theta)) \leq \eta_q(y,s) \bigr\}, \quad s \in \SS. 
\end{equation}
This corresponds to the general formula \eqref{eq:upper.joint} above with event $E=E_{y,s}$ given by 
\[ E_{y,s} = \{(\tilde y, \tilde \theta) \in \YY \times \TT: \eta_q(\tilde y, S(\tilde\theta)) \leq \eta_q(y,s) \}. \]
That \eqref{eq:pi.s} defines a bona fide possibility contour can be seen by verifying that $\pi_y(\hat s_y) = 1$ for each $y$, where $\hat s_y$ is as in \eqref{eq:s.hat}.  This contour defines the IM's possibility measure
\[ \uPi_y(H) = \max_{s \in H} \pi_y(s), \quad H \subseteq \SS. \]
There is a corresponding lower probability $\lPi_y$, but it will not be needed in what follows.

\subsection{Theoretical reliability properties}
\label{SS:valid}

The thesis of this paper is that the partial-prior IM offers a ``best of both worlds'' in the following sense: by incorporating exactly what is known about the model structure in the ranking and validification steps, without embellishment, we can simultaneously achieve Bayesian-like, fully conditional, possibilistic uncertainty quantification and frequentist-like unconditional reliability guarantees under realistic conditions.  The previous subsection showed how to obtain the fully conditional possibilistic uncertainty quantification about $S$, and the goal of the present subsection is to establish the latter claims about frequentist-like reliability.  

The type of reliability proved below differs from the classical frequentist reliability properties in important and relevant ways which deserve explanation.  To start, consider the (unrealistic) case with a single prior distribution that is believed to be ``correct.''  Then Bayesian inference based on this ``correct'' prior is reliable in any relevant sense---this is because the Bayes rule used to define the posterior distribution and the probability theory used to evaluate the posterior distribution's properties are closely tied and fully consistent.  For example, under the posited data--parameter joint distribution, the expected value of a posterior probability is the prior probability, so it follows easily from Markov's inequality that the posterior probability exceeding a large multiple of the presumed ``correct'' prior probability is a rare event.  More generally, suppose that one has a credal set of prior distributions, one of which is believed to be ``correct'' in the sense above.  Then the more conservative {\em generalized Bayes} solution championed by \citet{walley1991} and others shares the same reliability properties just described for the single-prior case; though, having a ``correct'' prior credal set is a weaker condition than having a single ``correct'' prior, and so the inferences are more conservative and hence also weaker.  The conservatism of generalized Bayes is beyond our present scope, but the interested reader can consult \citet{kyburg1987}, \citet{walley2002}, and \citet{gong.meng.update}.  Our goal is valid and efficient uncertainty quantification about $S$, so we opt for a solution that is less conservative than generalized Bayes.  The results we prove below are (generalized) Bayesian in spirit, i.e., the partial prior information is incorporated both in the IM solution and in the metric used to evaluate the reliability, but since we are not doing conditioning, etc.~according to the rules of probability, this reliability is not automatic like it is for (generalized) Bayes and needs to be verified directly.  

We start with the main result.  At a superficial level, this result establishes a property similar to that satisfied by p-values.  But the reader should keep in mind that the present context---involving partial prior information, etc.---is entirely different from that in which p-values are shown to be stochastically no smaller than uniform.  

\begin{theorem}
\label{thm:valid}
Let $\pi_y$ be the marginal possibility contour for $S$ associated with the partial-prior IM, depending implicitly on the prior contour $q$ that quantifies our {\em a priori} uncertainty about the full model parameter $\Theta$.  Let $\overline\joint$ denote the upper joint distribution for $(Y,\Theta)$ in \eqref{eq:upper.joint} relative to the posited model for $Y$, given $\Theta$, and the credal set $\credal=\cred(q)$ determined by the partial prior information.  Then 
\begin{equation}
\label{eq:valid}
\overline\joint\{ \pi_Y(S(\Theta)) \leq \alpha \} \leq \alpha, \quad \alpha \in [0,1]. 
\end{equation}
\end{theorem}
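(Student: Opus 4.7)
The plan is to collapse everything to a single real-valued statistic and then invoke the classical probability integral transform inequality, separately against each joint distribution in $\joints$, before recombining via the supremum that defines $\overline\joint$.

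First, I would introduce the one-dimensional statistic $T = \eta_q(Y, S(\Theta))$, viewed as a random variable on $\YY \times \TT$, and define the function $G(t) = \overline\joint\{T \leq t\} = \sup_{\prior \in \credal} \joint_\prior\{T \leq t\}$ for $t \in [0,1]$. The key identification is that the marginal IM contour in \eqref{eq:pi.s} rewrites as $\pi_y(s) = G(\eta_q(y,s))$, so that, as random quantities on the product space,
\[
\pi_Y(S(\Theta)) \;=\; G(T).
\]
Consequently the target inequality \eqref{eq:valid} becomes $\overline\joint\{ G(T) \leq \alpha \} \leq \alpha$, i.e., $\sup_{\prior \in \credal} \joint_\prior\{G(T) \leq \alpha\} \leq \alpha$.

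Next, I would fix an arbitrary $\prior \in \credal$ and let $F_\prior(t) = \joint_\prior\{T \leq t\}$ be the ordinary distribution function of $T$ under the single joint $\joint_\prior$. By construction of $G$ as the pointwise supremum, $G(t) \geq F_\prior(t)$ for every $t \in [0,1]$, hence $\{G(T) \leq \alpha\} \subseteq \{F_\prior(T) \leq \alpha\}$, which yields the pointwise-in-$\prior$ bound
\[
\joint_\prior\{G(T) \leq \alpha\} \;\leq\; \joint_\prior\{F_\prior(T) \leq \alpha\} \;\leq\; \alpha,
\]
where the final inequality is the standard probability integral transform statement for a real-valued random variable against its own distribution function (valid without continuity, by the usual argument using the left-continuous quantile). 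Taking the supremum over $\prior \in \credal$ of the extreme left-hand side delivers exactly \eqref{eq:valid}.

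The main obstacle I anticipate is not a technical calculation but the interpretive step of legitimizing the identity $\pi_Y(S(\Theta)) = G(T)$; this requires carefully observing that the outer $\overline\joint$ in \eqref{eq:pi.s} is taken over the joint distribution of $(Y,\Theta)$ while $(y,s)$ appears only through the scalar cutoff $\eta_q(y,s)$, so evaluating $\pi_y$ at the data--structure pair $(Y, S(\Theta))$ really does amount to composing the univariate function $G$ with the single statistic $T$. A minor side issue is measurability of $\{G(T) \leq \alpha\}$, which is immediate because $G$ is monotone nondecreasing on $[0,1]$. No continuity, no regularity on the credal set $\credal$, and no exchange of sup and expectation are needed: the argument is a clean pointwise-in-$\prior$ reduction followed by taking the supremum.
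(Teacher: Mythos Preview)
Your proposal is correct and is essentially the same argument as the paper's own proof: the paper also fixes an arbitrary $\prior\in\credal$, observes that $\pi_y(s)$ dominates the single-$\prior$ distribution function $u_\prior(y,s)=\joint_\prior\{\eta_q(Y,S(\Theta))\leq\eta_q(y,s)\}$ (your $F_\prior$), concludes $\{\pi_Y(S(\Theta))\leq\alpha\}\subseteq\{u_\prior(Y,S(\Theta))\leq\alpha\}$, applies the probability integral transform under $\joint_\prior$, and then takes the supremum over $\prior$. Your repackaging via the scalar statistic $T$ and the envelope $G$ is a tidy abstraction of exactly those steps.
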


\begin{proof}
See Appendix~\ref{A:properties}. 
\end{proof}

Theorem~\ref{thm:valid} has (at least) two important consequences.  The first concerns the upper probability assigned to ``true hypotheses'' about the uncertain $S$.  A relevant class of hypotheses are those that make claims about the complexity of the uncertain $S$, e.g., $H = \{s: |s| > c\}$ or $H = \{s: |s| \leq c\}$, where $c > 0$ is a specified threshold.  Applied to hypotheses like these, the following corollary implies that the IM will not systematically under- or over-fit.  Specifically, Corollary~\ref{cor:valid} says that ``$\uPi_Y(H)$ is small and $S(\Theta) \in H$'' is a rare event (with respect to all the joint distributions having prior compatible with $q$). So, if we apply the standard inductive logic, then the rule ``judge hypothesis $H$ to be false if $\uPi_Y(H)$ is sufficiently small'' is reliable in the sense that it is a rare event that this inference is wrong.  As such, this result can be compared to the classical frequentist Type~I error rate control. 

\begin{cor}
\label{cor:valid}
Under the setup of Theorem~\ref{thm:valid}, for any hypothesis $H$ concerning $S(\Theta)$, the marginal possibilistic IM satisfies 
\[ \overline\joint\{ \uPi_Y(H) \leq \alpha \text{ and } S(\Theta) \in H \} \leq \alpha, \quad \alpha \in [0,1]. \]
\end{cor}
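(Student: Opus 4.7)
The plan is to derive the corollary as a direct consequence of Theorem~\ref{thm:valid}, using nothing more than the defining relationship between the contour $\pi_y$ and the possibility measure $\uPi_y$, together with the monotonicity of the upper joint distribution $\overline\joint$. The heavy lifting has already been done in Theorem~\ref{thm:valid}; this corollary is essentially a repackaging of that statement in terms of an arbitrary hypothesis $H$.

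The first step is to invoke the identity $\uPi_y(H) = \sup_{s \in H} \pi_y(s)$ from the IM construction. This immediately yields the pointwise bound $\pi_y(s) \leq \uPi_y(H)$ whenever $s \in H$. Applied with $s = S(\Theta)$, this means that on the event $\{S(\Theta) \in H\}$ the inequality $\pi_Y(S(\Theta)) \leq \uPi_Y(H)$ holds deterministically. Therefore, the event of interest satisfies the set inclusion
\[
\{\uPi_Y(H) \leq \alpha\} \cap \{S(\Theta) \in H\} \subseteq \{\pi_Y(S(\Theta)) \leq \alpha\}.
\]

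The second step is to carry this inclusion through the upper joint distribution. Since $\overline\joint$ is defined as a supremum of (additive, and in particular monotone) probability measures $\joint_\prior$ over $\prior \in \credal$, it inherits monotonicity with respect to set inclusion. Applying this monotonicity to the inclusion above and then invoking Theorem~\ref{thm:valid} gives
\[
\overline\joint\{\uPi_Y(H) \leq \alpha \text{ and } S(\Theta) \in H\} \leq \overline\joint\{\pi_Y(S(\Theta)) \leq \alpha\} \leq \alpha,
\]
which is exactly the claim.

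I do not anticipate any real obstacle here: the only subtlety worth stating explicitly is the monotonicity of $\overline\joint$, which is immediate from its definition in \eqref{eq:upper.joint} as a supremum over a credal set of genuine probability measures. The proof is short enough that it can be written as a two- or three-line deduction from Theorem~\ref{thm:valid}, with the main pedagogical point being to highlight that reliability at the ``truth'' $S(\Theta)$ automatically propagates to reliability over any hypothesis $H$ that contains the truth, precisely because the possibility measure $\uPi_y$ is defined by optimization rather than integration.
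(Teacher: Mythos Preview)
Your argument is correct. The key observation---that $S(\Theta)\in H$ forces $\pi_Y(S(\Theta))\leq\uPi_Y(H)$, so the event in question is contained in $\{\pi_Y(S(\Theta))\leq\alpha\}$---is exactly the mechanism the paper relies on, and monotonicity of $\overline\joint$ plus Theorem~\ref{thm:valid} finishes it.

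The only difference from the paper is packaging. Rather than arguing directly as you do, the paper simply notes that Corollary~\ref{cor:valid} is a special case of the uniform statement in Corollary~\ref{cor:uvalid}, and proves the latter by showing that the event ``$\uPi_Y(H)\leq\alpha$ for \emph{some} $H\ni S(\Theta)$'' is actually \emph{equal} to $\{\pi_Y(S(\Theta))\leq\alpha\}$. Your direct route is shorter for the fixed-$H$ claim and uses only the easy inclusion; the paper's route goes through the stronger equivalence because it wants the uniform-in-$H$ result anyway. Both rest on the same contour-to-possibility inequality, so there is no substantive divergence.
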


\begin{proof}
This result is implied by the stronger one in Corollary~\ref{cor:uvalid} below.
\end{proof}

Importantly, the above validity property not only holds for any fixed hypothesis $H$, it holds uniformly in hypotheses, revealing how strong the original calibration property in Theorem~\ref{thm:valid} really is.  For example, the IM is safe from false confidence even if the hypotheses are allowed to be data-dependent in some way.  

\begin{cor}
\label{cor:uvalid}
Under the setup of Theorem~\ref{thm:valid}, the marginal possibilistic IM satisfies 
\[ \overline\joint\{ \text{$\uPi_Y(H) \leq \alpha$ for some $H$ such that $S(\Theta) \in H$} \} \leq \alpha, \quad \alpha \in [0,1]. \]
\end{cor}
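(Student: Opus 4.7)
The plan is to reduce the uniform-in-$H$ claim to the pointwise calibration statement already provided by Theorem~\ref{thm:valid}, by showing that the event appearing in Corollary~\ref{cor:uvalid} is in fact identical (not merely contained in) the event $\{\pi_Y(S(\Theta)) \leq \alpha\}$ whose $\overline{\joint}$-probability Theorem~\ref{thm:valid} already bounds by $\alpha$. The whole argument is set-theoretic and hinges on the fact that a possibility measure is defined by supremum over a set.

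First, I would prove the inclusion $\{\exists H : S(\Theta) \in H \text{ and } \uPi_Y(H) \leq \alpha\} \subseteq \{\pi_Y(S(\Theta)) \leq \alpha\}$. Fix a sample point for which the left-hand event occurs and let $H$ be a witness. By the definition $\uPi_y(H) = \max_{s \in H} \pi_y(s)$ from the end of the partial-prior IM construction, membership $S(\Theta) \in H$ forces
\[
\pi_Y(S(\Theta)) \;\leq\; \max_{s \in H} \pi_Y(s) \;=\; \uPi_Y(H) \;\leq\; \alpha,
\]
which is exactly the right-hand event.

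Second, I would prove the reverse inclusion by exhibiting a canonical witness. If $\pi_Y(S(\Theta)) \leq \alpha$, take the singleton hypothesis $H = \{S(\Theta)\}$: trivially $S(\Theta) \in H$, and $\uPi_Y(H) = \pi_Y(S(\Theta)) \leq \alpha$, so the existence claim on the left holds. Combining the two inclusions gives equality of the events, and then the bound $\overline{\joint}\{\pi_Y(S(\Theta)) \leq \alpha\} \leq \alpha$ from Theorem~\ref{thm:valid} transfers directly to the uniform statement. Finally, Corollary~\ref{cor:valid} is recovered because $\{\uPi_Y(H) \leq \alpha \text{ and } S(\Theta) \in H\}$ for a fixed $H$ is a subevent of the existence event just analyzed, so monotonicity of $\overline{\joint}$ closes the loop.

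Honestly there is no significant obstacle here; the only conceptual subtlety is recognizing that the apparent strengthening from ``fixed $H$'' to ``some $H$'' is illusory, because possibility measures are defined by a supremum over their argument set and the tightest choice of $H$ containing $S(\Theta)$ is the singleton, at which $\uPi_Y$ agrees with $\pi_Y$ evaluated at the truth. Once this is articulated, the two-line sandwich above delivers the result, with all measurability and $\overline{\joint}$-measurability inherited from the statement of Theorem~\ref{thm:valid}.
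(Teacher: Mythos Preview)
Your proposal is correct and matches the paper's own proof essentially line for line: the paper likewise shows the two events $E_1 = \{\exists H \ni S(\theta): \uPi_y(H)\le\alpha\}$ and $E_2 = \{\pi_y(S(\theta))\le\alpha\}$ coincide, using the singleton $H=\{S(\theta)\}$ for $E_2\subseteq E_1$ and monotonicity of $\uPi_y$ for $E_1\subseteq E_2$, then invokes Theorem~\ref{thm:valid}.
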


\begin{proof}
See Appendix~\ref{A:properties}. 
\end{proof}

The next property is more familiar than the previous one, closely resembling the classical coverage probability bounds for confidence sets.  Here, if $\pi_y$ is the marginal, partial-prior IM contour for $S$, then define the upper level sets 
\begin{equation}
\label{eq:level.set}
C_\alpha(y) = \{s: \pi_y(s) > \alpha\}, \quad \alpha \in [0,1]. 
\end{equation}
Corollary~\ref{cor:coverage} below says that $C_\alpha(Y)$ is a $100(1-\alpha)$\% confidence set for $S$, in the sense that the $\overline\joint$-probability that $C_\alpha(Y)$ misses the true structure is no larger than $\alpha$.  But note that this is different from the familiar coverage probability bounds for confidence intervals in how it depends on the partial prior information.  In particular, the ``probability'' used to evaluate the (non-)coverage probability depends on $q$ or $\credal$; rather than surgically removing overly complex structures from the usual coverage probability calculation, we effectively average over those structures compatible with the (imprecise) prior information.  

\begin{cor}
\label{cor:coverage}
Under the setup of Theorem~\ref{thm:valid}, 
\begin{equation}
\label{eq:coverage}
\overline\joint\{ C_\alpha(Y) \not\ni S(\Theta) \} \leq \alpha, \quad \alpha \in [0,1]. 
\end{equation}
\end{cor}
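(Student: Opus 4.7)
The plan is straightforward and really a one-step reduction to Theorem~\ref{thm:valid}. The entire content of the corollary is contained in rewriting the non-coverage event in terms of the contour $\pi_Y$ evaluated at the true structure and then invoking the calibration already established.

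First, I would unpack the definition of the level set in \eqref{eq:level.set}: $C_\alpha(y) = \{s \in \SS : \pi_y(s) > \alpha\}$. By definition, for any fixed pair $(y,\theta)$ in the product space $\YY \times \TT$, the condition $S(\theta) \notin C_\alpha(y)$ is logically equivalent to $\pi_y(S(\theta)) \leq \alpha$. Therefore, viewed as events on $\YY \times \TT$,
\[
\{C_\alpha(Y) \not\ni S(\Theta)\} \;=\; \{\pi_Y(S(\Theta)) \leq \alpha\}.
\]
This is the standard duality between a confidence region and the calibration of the pivot-like quantity used to define it; no measure-theoretic subtlety beyond measurability of the contour is involved.

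Second, I would apply Theorem~\ref{thm:valid} directly to the right-hand event, which yields $\overline\joint\{\pi_Y(S(\Theta)) \leq \alpha\} \leq \alpha$, and combining this with the set equality above gives \eqref{eq:coverage}. No induction, no auxiliary lemmas, and no separate handling of boundary cases are needed; the monotonicity of the upper envelope in \eqref{eq:upper.joint} preserves the inequality under the substitution.

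There is no serious obstacle here, but one point deserves emphasis in the write-up so that readers do not misread the corollary as a classical frequentist coverage statement. The probability $\overline\joint$ is the upper envelope over the credal set $\credal = \cred(q)$ of joint distributions on $(Y,\Theta)$, so \eqref{eq:coverage} is a simultaneous coverage guarantee over \emph{every} joint distribution compatible with the partial prior $q$, including the marginal over $S$ it induces. This is precisely the ``average-out-the-structure'' interpretation flagged in the paragraph preceding the corollary, and it is what distinguishes the guarantee from the classical worst-case-over-$S$ coverage criterion.
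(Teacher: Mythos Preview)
Your proposal is correct and takes exactly the same approach as the paper: observe that $C_\alpha(Y) \not\ni S(\Theta)$ is equivalent to $\pi_Y(S(\Theta)) \leq \alpha$ by the definition of the level set, then apply Theorem~\ref{thm:valid}. The paper's proof is a single sentence stating precisely this equivalence and invoking \eqref{eq:valid}.
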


\begin{proof}
Since $C_\alpha(Y) \not\ni S(\Theta)$ is equivalent to $\pi_Y(S(\Theta)) \leq \alpha$, just apply \eqref{eq:valid}.  
\end{proof}

\subsection{Empirical reliability demonstration}
\label{SS:demo}

The goal of the present section is to empirically confirm the coverage probability property advertised in Corollary~\ref{cor:coverage}. It is important to note that this property is with respect to the imprecise joint distribution which depends on the credal set $\credal$ and, thus, on $q$. However, the relationship between $\joint_\prior$ and $\overline\joint$ requires that coverage is also satisfied by any joint distribution $\joint_\prior$ in the credal set induced by $q$. Therefore, it suffices to sample from a precise prior probability distribution consistent with $q$.

For our case, since the prior knowledge about $(S,\Theta_S)$ is vacuous given $K=|S|$, we only need to know how to sample $K$. Once $K=|S|$ is given, the values of $(S,\Theta_S)$ can be chosen in literally any way, e.g. fixed at any value, sampled from any distribution, etc. However, care must be taken in sampling $K$ in order for the prior distribution to be compatible with the contour $q_K$ for $K$. Specifically, since the upper probability determined by $q_K$ must dominate any prior $\prior_K$ for $K$, and since the complexity $K$ is a discrete variable, we must choose $\prior_K$ such that $\prior_K(K=k) \leq q_K(k)$ for each $k$.  For our simple illustration here, we opt to take $\prior_K$ to be the ``most diffuse'' of those probability distributions compatible with $q_K$.  Since $q_K(k) = 0$ for $k > p$, we propose to take the prior probability distribution $\prior_K$ such that 
\[ \prior_K(K \geq k) = q_K(k), \quad k=0,1,\ldots,p. \]
In particular, if $q_K(k)$ is decreasing in $k$, which is natural when the goal is to penalize model structures that are too complex, then the probability mass function must be 
\begin{align} 
\prior_K(K=k) & = q_K(k) - q_K(k+1), \label{eq:Q_K} 
\qquad k=0,1,\ldots,p.
\end{align}
In what follows, we take the prior contour to be $q_K(k) = e^{-\gamma k}$ for $\gamma > 0$ to be specified. 

To demonstrate that our method achieves finite-sample validity, we generate 10,000 data sets from equation \eqref{eq:lm} and determine whether the true structure $S$ is contained in the confidence set constructed using the upper level sets of equation \eqref{eq:level.set}.  More specifically, we start by generating the covariates in $x$ as independent standard normals.  This is done once at the outset and treated as fixed throughout the simulation.  The dimension is $p=3$ and $n=25$, just to keep the details manageable.  To sample data in a way that is compatible with the partial prior information, we proceed as follows.  Start by drawing $K$ from the prior distribution $\prior_K$ in \eqref{eq:Q_K} above, with $\gamma=1$.  Since the model-specific parameters can be chosen in any way, given $K=k$, we randomly choose $S$ to be one of the $\binom{3}{k}$ many models of size $k$ and that determines which entries from the full, $p=3$ dimensional coefficient vector $(0.5, 1, 5)^\top$ are active; 
we take the intercept and error variance fixed at 0 and 1, respectively.  With the columns of $x$ and all the model parameters determined, we can finally generate the response variable $Y$ according to \eqref{eq:lm}.  For each data set generated in this way, we evaluate $\pi_Y(S)$ to check if the true $S$ is contained in the confidence set $C_\alpha(Y)$ over a range of $\alpha$ values.  

For comparison, we also carry out a Bayesian analysis that uses the same prior $\prior_K$ above for the model complexity, a uniform prior on models of the given complexity, and then a conjugate normal--inverse gamma prior on the parameters, with standard ``non-informative'' choices for the hyperparameters.   Specifically, we take an inverse gamma prior, ${\sf InvGam}(0.01, 0.01)$, for the error variance $\Lambda$ and an independent normal conditional prior, ${\sf N}(0, 100\lambda)$, for the regression coefficients $\Phi$, given $\Lambda=\lambda$.  Then we compute the posterior distribution for the model $S$, obtain the corresponding $100(1-\alpha)$\% credible set, and check if it contains the true $S$, again over a range of $\alpha$ values.  

\begin{figure}[t]
    \centering
\scalebox{0.7}{\includegraphics{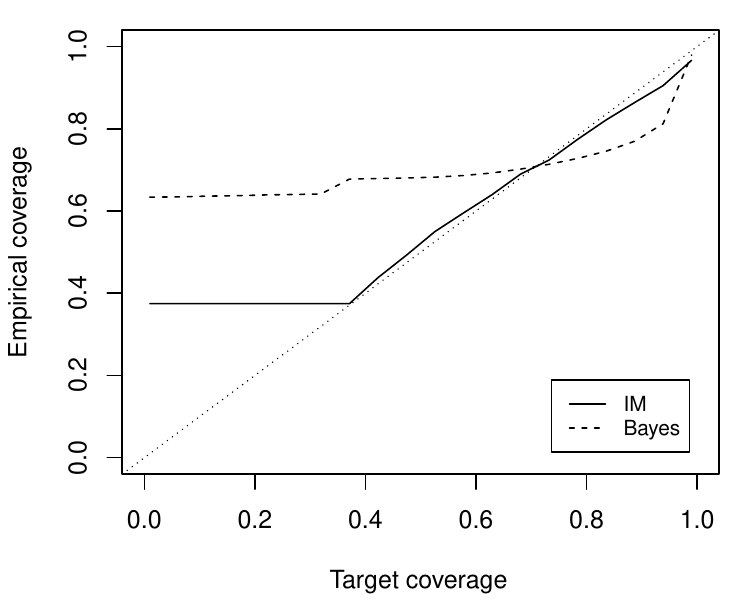}}
    \caption{Empirical versus target/nominal coverage for Bayes and IM uncertainty quantification for the experiment described in the main text.}
    \label{fig:coverage}
\end{figure}

Figure~\ref{fig:coverage} shows the realized empirical coverage probability for the two different confidence sets in the simulations compared to the target/nominal coverage probability.  On or above the diagonal lines implies that the empirical coverage probability attains or exceeds the target, which is what Corollary~\ref{cor:coverage} implies for the IM solution.  As discussed in Appendix~\ref{A:computation}, the computational strategy being used to evaluate the IM contour results in an apparently not-so-severe under-estimation---see the slight dip below the diagonal line in the upper right-hand corner of the plot in Figure~\ref{fig:coverage}.  The Bayes solution offers no coverage probability guarantees whatsoever and, as is apparent from the plot, its coverage probability can fall significantly short of the target level.

\section{Partial prior elicitation} 
\label{sec:prior_contour}

\subsection{General considerations}

There is an obvious question that needs to be addressed, that is, {\em where does the prior contour $q$ come from?}  To set the context of the discussion here, we admit that we cannot possibly know here what a data analyst will know in his/her application and, therefore, we are in no position to say what their prior possibility contour ought to be.  The goal here is simply to offer some different perspectives on how the data analyst might process the information they have available in order to determine a justifiable $q$.  We offer two different---but necessarily related---perspectives or approaches below.  What the two approaches have in common is that they aim to focus on what the data analyst would {\em do} rather than on what he/she {\em thinks}.  The two, of course, are related, but, as de Finetti argued, assessing agents' behavior is often easier than assessing their beliefs.  

For the sake of simplicity and concreteness, and without much loss of generality or practicality, we assume here that the data analyst has vacuous prior information about the specific model structure $S$ and about the structure-specific parameters $\Theta_S$.  On the other hand, we assume the data analyst does have genuine prior information about the size/complexity $K=|S|$ of the uncertain model structure $S$, e.g., the number of explanatory variables with nonzero effect size in linear regression.  Under this setup, i.e., where the only relevant information available concerns the complexity $K$, the prior possibility contour simplifies as 
\begin{align*}
q_{S,\Theta_S}(s,\theta_s) 
& = q_K(k) \, \underbrace{q_{S|K}(s \mid k)}_{\equiv 1} \, \underbrace{q_{\Theta_S|S}(\theta_s \mid s)}_{\equiv 1}.
\end{align*} 
That is, the prior possibility contour $q$ at a particular value $\theta$ of the uncertain $\Theta=(S,\Theta_S)$ depends only on the complexity $k = |S(\theta)|$ of the structure inherent in the particular $\theta$.  Consequently, the original question posed at the start of the section simplifies to the following: {\em how to choose the prior possibility contour $q_K$ for the uncertain complexity $K$?} 

\paragraph{Bayesian-style elicitation.} 
The conversion of relevant information about unknowns into precise prior probability distributions is a classical problem in Bayesian inference \citep[e.g.,][]{kass1995bayes, ohagan2004advanced, gelman2020holes, mikkola.etal.2024}.  The same considerations can and have been extended to cases where the goal is to determine imprecise probabilities; see \citet{smithson2014elicitation} and the references therein.  Recall that $q_K(k)$ is interpreted as an upper probability for the event ``$K=k$'' and, therefore, can be interpreted as an upper bound on the price one is willing to accept in exchange for an obligation to pay another player \$1 if it happens that $K$ indeed equals $k$.  These elicitation strategies are nontrivial to carry out in general, but our present situation is considerably simpler since we are only concerned with {\em a priori} uncertainty quantification about the bounded, nonnegative integer-valued unknown $K$.  So, it is not out of question that the data analyst can carry out such an assessment to obtain $q_K$.  

There are other ways to think about this assessment that do not strictly involve gambling.  Recall that the possibility contour $q_K$ has an interpretation similar to that of a p-value, i.e., $1-q_K(k)$ represents how surprised \citep[e.g.,][]{shackle1961decision} we would be if $K$ were equal to $k$, and, therefore, if one can directly assess this degree of surprise for each $k$, then $q_K$ obtains.  Indeed, there are some values of $K$ that would not be surprising at all, e.g., one's {\em a priori} ``best-guess,'' and these would be assigned $q_K = 1$; then there are ``extreme'' values of $K$, likely corresponding to the most-complex structures, that would be entirely surprising, and these are assigned $q_K \approx 0$.  There may be some intermediate values of $K$ that are surprising but not entirely so, and these would be assigned likewise intermediate values of $q_K$.  An alternative perspective on this is identifying the significance level at which one would reject the assertion ``$K=k$'' based on {\em a priori} information alone.  If one is uncomfortable with p-value-based reasoning, one can equivalently carry out this process by thinking in terms of confidence sets.  That is, if one can construct confidence sets $\{C_\alpha: \alpha \in [0,1]\}$ for $K$ based on prior information alone, then assign $q_K(k)$ equal to the largest $\alpha$ such that $C_\alpha \ni k$.  Note, though, \citet{juslin.etal.2007} and the references therein describe how (and why) confidence intervals constructed by individuals based on intuition tend to be overconfident in the sense of being too narrow.  This tendency is not too concerning since, in the present context, we are constructing the prior $q_K$ and not the final inferences.  

We consider a direct assessment of surprise, as described above, to be the gold-standard elicitation strategy, but there are other potentially simpler strategies that could be employed.  The angle we focus on here is one in which (subjective) probabilities are determined by drawing a connection between the context at hand and another more familiar context---or canonical example---where (objective) probabilities are available.  This is akin to the {\em constructive probability} arguments advanced in, e.g., \citet{shafer1985languages}.  In probability theory, it is possible that each possible realization of a random variable has small (or zero) probability.  For this reason, the notion of ``surprise'' in a particular value $k$ for random variables translates to the probability that $K$ is at least as extreme as $k$; again, we see the connection between the possibility contour and p-values.  A natural way to formulate this is to choose a probability distribution with mass function $f$ and then set 
\[ q_K(k) \gets \sum_{\kappa: f(\kappa) \leq f(k)} f(\kappa), \quad k=0,1,\ldots \]
To put this into practice, one considers different mass functions $f$---perhaps of a parametric form indexed by one or more hyperparameters---and chooses that for which the possibility assignments above are at least consistent with one's {\em a priori} assessment of surprise.  In our present context, one might entertain binomial or (truncated) geometric mass functions, indexed by a ``success probability'' hyperparameter, and then vary that hyperparameter until the resulting $q_K$ agrees with one's assessments.

\paragraph{Frequentist-style elicitation.} 
Classical methods do not (directly) make use of any prior information about the relevant unknowns.  So, no prior elicitation is required to define and implement the method itself.  Where the frequentist reveals his/her prior beliefs is in the investigation of their method's sampling properties.  (Bayesians actually do something similar; prior distributions are chosen such that the posterior inferences satisfy certain desirable frequentist properties.) The motivation behind discounting complex structures is that the true structure is believed to be rather simple.  As such, the theoretical investigations associated with lasso and other complexity-penalized estimators proceed under the assumption that the true structure is sufficiently simple, e.g., ``sparse''.  Accordingly, the partial-prior IM approach suggests a strategy by which the frequentist's prior beliefs about the structure $K$ can be assessed.  Treat $1-q_K(k)$ as the data analyst's degree of doubt that their IM is reliable in the sense of \eqref{eq:valid} when the true complexity is $k$.  For those $k$ such that the data analyst is not willing to tolerate any doubt, set $q_K(k)=1$.  Since the data analyst does not expect $K$ to be large, he/she should be willing to tolerate some degree of doubt that their desirable properties hold for extreme $k$, and it is through this assessment that the frequentist's $q_K$ emerges.  Since there is no free lunch, the data analyst must be willing to tolerate some degree of the aforementioned doubt; otherwise, he/she is forced to a vacuous prior and inference is inefficient in the sense that their IM favors the most complex structures.  We think this perspective is attractive---it guarantees that no one, neither the producer nor the consumer of frequentist methods, overlooks the fact that the efficiency gains associated with excluding the most complex structures come at the cost of a limited-complexity assumption about the true model structure.

\subsection{A simple strategy}
\label{SS:simple.strategy}

To keep the details as simple as possible, and again without much loss of generality or practicality, we focus here on a prior possibility contour of the form
\begin{equation}
\label{eq:contour.gamma}
q_K(k) = e^{-\gamma \, k}, \quad k=0,1,2,\ldots,p, \quad \gamma > 0. 
\end{equation}
The effect of different choices of the hyperparameter $\gamma$ will be investigated below.  Note that Bayesian methods in high-dimensional problems often take (marginal) prior distributions for the model complexity to be of the general form above; i.e., exponentially decreasing in $k$, and $\gamma$ is usually taken to be proportional to $\log p$, where $p$ is the (very large) number of explanatory variables \citep[see, e.g.,][]{castillo2015bayesian, martin.mess.walker.eb, belitser.ghosal.ebuq}.  This seemingly makes for a very strong penalty against complex model structures, but the ubiquity of these priors in the Bayesian literature suggests this form might in fact be necessary for posterior structure selection consistency.  Our present focus is on low- to moderate-dimensional cases, so we treat $\gamma$ itself as the hyperparameter and in this section we offer a simple strategy to help guide the choice of $\gamma$.  A more holistic approach that considers all of the available information in a given problem would be recommended, and we discuss this in the next section.  Our goal in this subsection is just to bring the partial-prior-dependent IM solution described above to life with as little overhead as possible.  

One of the most basic summaries of a probability distribution is its mean or expected value.  The prior contour $q_K$ has an associated credal set $\credal_K=\cred(q_K)$ for $K$ and, in turn, an upper expected value.  This upper expectation can again be computed using a Choquet integral.  Since the contour (\ref{eq:contour.gamma}) is monotone in the natural ordering on the integers, it turns out that the Choquet integral exactly agrees with the expected value under $\prior_K$, where $\prior_K$ is that ``most diffuse'' probability distribution in $\credal_K=\cred(q_K)$, i.e., 
\[ \overline{\text{mean}}(\gamma) = \sum_{k=0}^p k \, \prior_K(K=k), \quad \gamma > 0. \]
This is easy to evaluate numerically and plots of this upper expectation curve are shown in Figure~\ref{fig:upper.mean} for $p=3$ and $p=8$.  As expected, larger $\gamma$ makes the contour more concentrated at 0, hence the upper expectation should be decreasing in $\gamma$.  In the aforementioned Bayesian high-dimensional literature, the priors are likewise concentrated near 0 to the extent that the prior mean is rather small.  For example, the class of prior distributions considered in \citet{castillo2015bayesian} all have expected model complexity smaller than $O(p^{-1})$, which is quite small when $p$ is large.  Based on this, a reasonable strategy would be to set the upper expected value equal to something like $O(p^{-1})$ and then solve for $\gamma$ at least approximately.  The plots in Figure~\ref{fig:upper.mean} have a horizontal line drawn at $2p^{-1}$, which is not as aggressive as the prior assessments in the aforementioned literature, since we are not considering the high-dimensional situations here.  In the $p=3$ case this suggests choosing $\gamma$ close to 1 and, in the $p=8$ case, choosing $\gamma$ somewhere between 1.5 and 2.  Note that $\gamma \approx 1$ agrees with the choice taken in Section~\ref{SS:demo}.  Moreover, since $\log 3 \approx 1.1$ and $\log 8 \approx 2.08$, this strategy is suggesting a contour that is comparable to the penality employed by BIC.  

\begin{figure}[t]
\begin{center}
\subfigure[$p=3$]{\scalebox{0.6}{\includegraphics{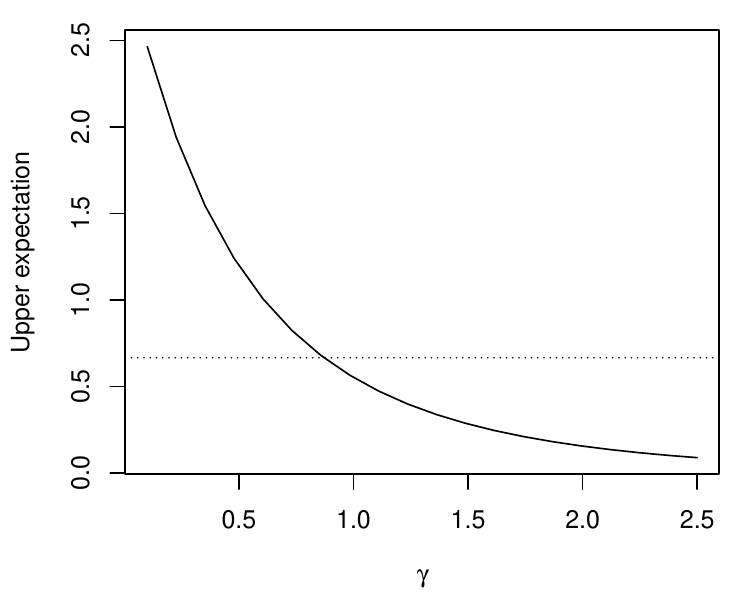}}}
\subfigure[$p=8$]{\scalebox{0.6}{\includegraphics{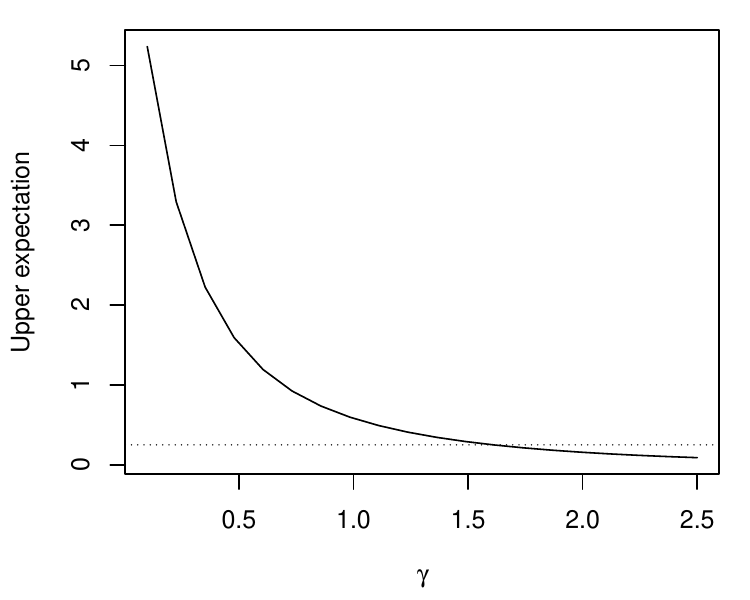}}}
\end{center}
\caption{Plots of the prior upper expected value $\gamma \mapsto \overline{\text{mean}}(\gamma)$ for two different values of $p$. }
\label{fig:upper.mean}
\end{figure}

A relevant refinement to the above analysis might treat the prior contour mode as a ``best prior guess'' based on available information about the problem, and then tune a scale parameter that controls how quickly the contour vanishes away from that mode.  

While the simple elicitation strategy just described is a reasonable one, we do not recommend its blind adoption as a practical default.  We cannot possibly know the specifics of each potential application, so practitioners should {\em always} take the full scope of the available information into consideration.  The discussions here and in the following section can help guide those considerations, nothing more.

\section{Applications} 
\label{sec:rda}

\subsection{Prostate cancer data}

In this section, we analyze the prostate cancer dataset from \citet{stamey1989prostate}, famously used as an illustration in \citet{tibshirani1996regression}. The study examined the relationship between the level of a prostate-specific antigen and eight clinical measures for 97 men about to receive a radical prostatectomy. The clinical measures included cancer volume (lcavol), prostate weight (lweight), age, amount of benign prostatic hyperplasia (lbph), an indicator for whether the cancer had spread to the seminal vesicles (svi), a measure of how much the cancer had penetrated the prostate wall (capsular penetration, lcp), a measure of cancer aggressiveness (gleason), and the percentage of tumor occupied by Gleason scores 4 and 5 (pgg45). Cancer volume, prostate weight, amount of benign prostatic hyperplasia, capsular penetration, and prostate-specific antigen were all log-transformed, hence the ``l'' in their acronym. The data along with descriptions of the clinical measures are available through the \verb|R| package \verb|faraway| \citep{faraway_rpackage}.

\begin{figure}[t]
\begin{center}
\subfigure[$\gamma=0.5$]{\scalebox{0.5}{\includegraphics{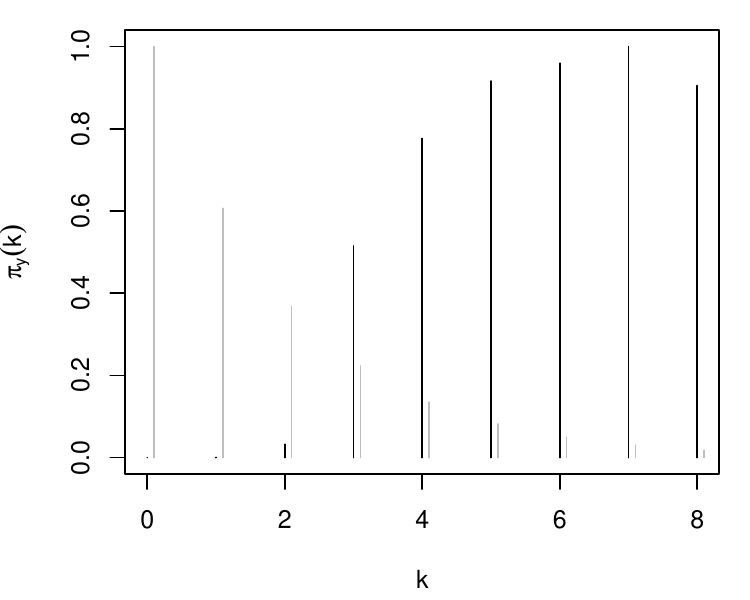}}}
\subfigure[$\gamma=1$]{\scalebox{0.5}{\includegraphics{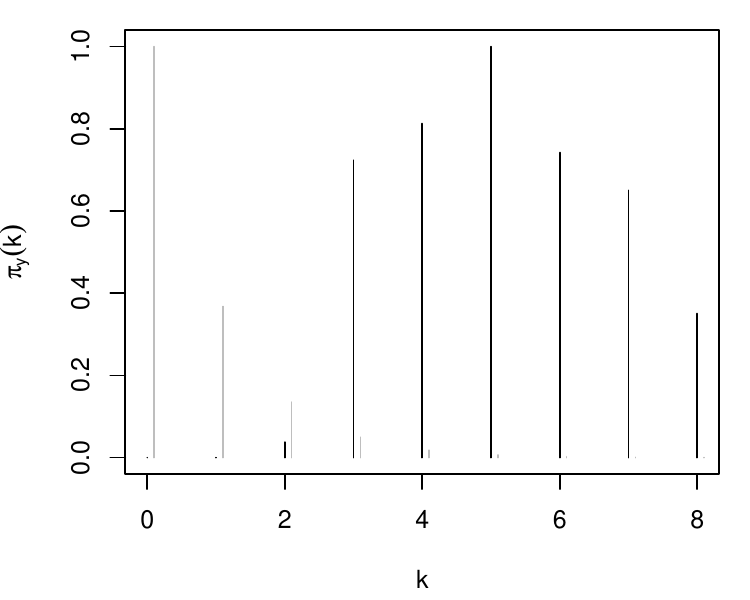}}}
\subfigure[$\gamma=1.5$]{\scalebox{0.5}{\includegraphics{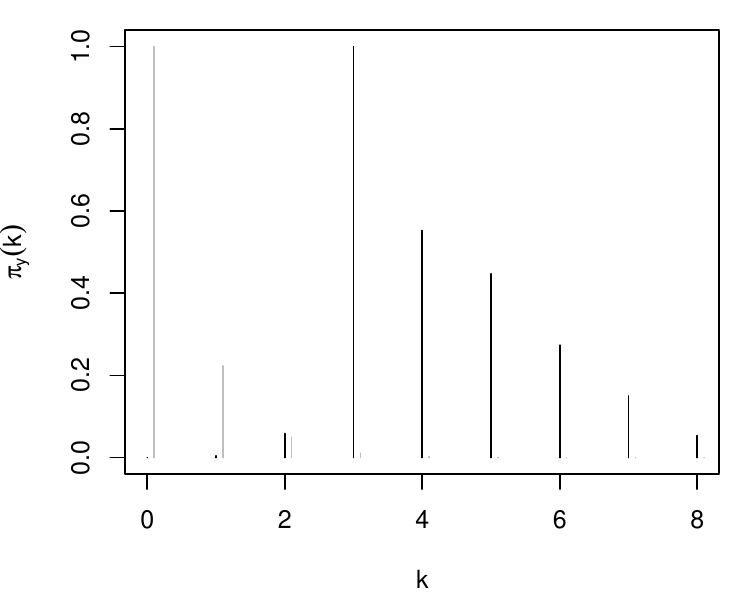}}}
\subfigure[$\gamma=2$]{\scalebox{0.5}{\includegraphics{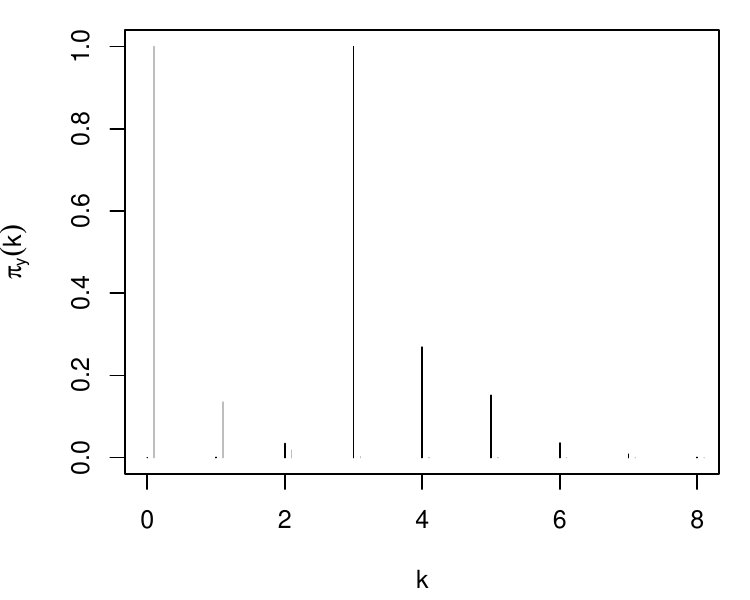}}}
\end{center}
\caption{Plots of the marginal IM contour $k \mapsto \pi_y^\text{marg}(k)$ for the model complexity $K=|S|$ in the prostate cancer data analysis based on four different values of the hyperparameter $\gamma$.  The slightly offset gray lines correspond to the prior contour $k \mapsto q_K(k)$.}
\label{fig:prostate}
\end{figure}

Figure~\ref{fig:prostate} shows the marginal IM contour $k \mapsto \pi_y^\text{marg}(k)$ for the model complexity $K=|S|$ based on these data and several different values of the hyperparameter $\gamma$.  Following the basic guidance in Section~\ref{SS:simple.strategy} above, we have reason to consider $\gamma$ values between 1.5 and 2.  For both of these endpoints, and all $\gamma$ values in-between, $\pi_y^\text{marg}$ has mode $k=3$, so this value would be contained in the IM's confidence sets for every confidence level.  Since there is only one structure $S$ with possibility $\pi_y(S)=1$ (see Table~\ref{table:prostate}), it is clear that this mode at $k=3$ uniquely corresponds to the structure that includes only the predictors lcavol, lweight, and svi, which agrees with the structure chosen by lasso \citep[][Table~1]{tibshirani1996regression}.  Of course, the larger the value of $\gamma$, the smaller the IM confidence set is, so the final choice comes down to a determination of how strongly one believes in sparsity of the underlying model.  When $\gamma=1.5$, the 95\% confidence set contains 43 model structures; when $\gamma=2$, the 95\% confidence set contains 9 model structures and the contents are displayed in Table~\ref{table:prostate}.  

\begin{table}[t]
\begin{center}
\begin{tabular}{ccccccccc}
\hline
lcavol & lweight & age & lbph & svi & lcp & gleason & pgg45 & $\pi_y(S)$ \\
\hline 
1 & 1 & 0 & 0 & 1 & 0 & 0 & 0 & 1.000 \\
1 & 1 & 1 & 0 & 1 & 0 & 0 & 0 & 0.156 \\
1 & 0 & 0 & 1 & 1 & 0 & 0 & 0 & 0.170 \\
1 & 1 & 0 & 1 & 1 & 0 & 0 & 0 & 0.265 \\
1 & 1 & 1 & 1 & 1 & 0 & 0 & 0 & 0.154 \\
1 & 1 & 0 & 0 & 1 & 1 & 0 & 0 & 0.101 \\
1 & 1 & 0 & 0 & 1 & 0 & 1 & 0 & 0.156 \\
1 & 1 & 0 & 0 & 1 & 0 & 0 & 1 & 0.188 \\
1 & 1 & 0 & 1 & 1 & 0 & 0 & 1 & 0.054 \\
\hline 
\end{tabular}
\end{center}
\caption{Contents of the IM's 95\% confidence set for the model structure in the prostate cancer data illustration.  The 0s and 1s in the first eight columns determine the contents of the model structure $S$.}
\label{table:prostate}
\end{table}

\subsection{World happiness data}

The World Happiness Report is conducted annually by the research-based consulting company Gallup.  The overarching goal of this project is to examine which variables are related to happiness, and to understand the strengths of those relationships.  The study consists of interviewing inhabitants in various countries throughout the world to assess their level of happiness, according to the Cantril Self-Anchoring Striving Scale \citep{glatzer2014}.  Following \citet{bergh.etal.2021}, we consider the data from 2018, available at \url{https://www.worldhappiness.report/ed/2018/}.  In addition to the aggregated happiness score within each country, the data set also reports on each country's log gross domestic product (lgdp), life expectancy (le), an assessment of social support (ss), an assessment of personal freedom (fr), an assessment of generosity (ge), and an assessment of government corruption (gc).  But it is expected that interactions between at least some of these variables would be relevant to describing variations in happiness, so we consider models that can contain any or all of those individual explanatory variables, along with pairwise interactions among logGDP, le, ss, and fr.  We adhere to the {\em principle of marginality} and only consider those model structures that include both main effects when the corresponding pairwise interaction is present.  This amounts to 440 different model structures to assess.

We applied the proposed IM solution to this problem almost exactly as described in Section~\ref{sec:add_ms} above.  That is, we formed a matrix $x$ of explanatory variables with a total of 12 columns: 6 for the main effects listed above and $\binom{4}{2}=6$ for the different pairwise interactions.  Then the relevant model structures consist of subsets of the 12 columns of $x$.  The one caveat, which creates a minor difference in the implementation here compared to what was described above, is that we restrict to those structures that respect the principle of marginality.  While this restriction is natural, it means that we cannot use the optimized {\tt leaps} package in \verb|R| which is tailored specifically to explore all subsets.  Instead, we carry out the partial-prior IM calculations using brute force enumeration of all 440 structures, which is no serious computational burden---the results described below were obtained in about 30 seconds on an ordinary Macbook Pro laptop computer.  

Just like in the prostate cancer data analysis above, Figure~\ref{fig:happy} shows the marginal IM contour $k \mapsto \pi_y^\text{marg}(k)$ for the model complexity $K=|S|$ based on these data, with prior hyperparameter $\gamma=2$, which is roughly the suggestion based on the general guideline in Section~\ref{SS:simple.strategy}.  With this choice, $\pi_y^\text{marg}$ clearly has mode $k=5$, so this value would be contained in the IM's confidence sets for every confidence level.  This corresponds to the model that includes the first four main effects---lgdp, le, ss, and fr---and the le $\times$ ss interaction.  This agrees with the maximum posterior probability model identified in the analysis presented in \citet{bergh.etal.2021}, but they can only offer a list of models ranked by their posterior probability; for example, they can offer no confidence set with frequentist-style coverage probability guarantees.  The proposed IM solution, on the other hand, offers provably reliable uncertainty quantification.  In particular, the corresponding 95\% confidence set for $S$ contains 10 model structures, and the specific contents are shown in Table~\ref{table:happy}.  

\begin{figure}[t]
\begin{center}
\scalebox{0.7}{\includegraphics{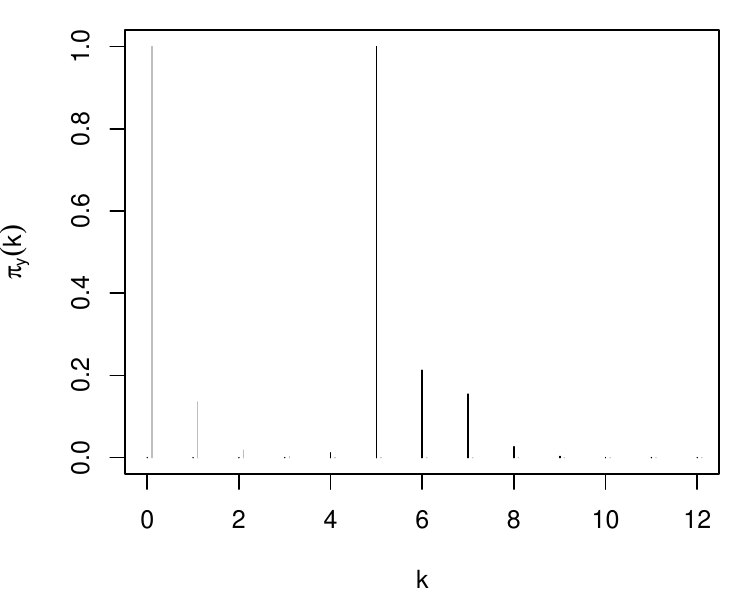}}
\end{center}
\caption{Plot of the marginal IM contour $k \mapsto \pi_y^\text{marg}(k)$ for the model complexity $K=|S|$ in the happiness data analysis with prior contour hyperparameter $\gamma=2$.  The slightly offset gray lines correspond to the prior contour $k \mapsto q_K(k)$.}
\label{fig:happy}
\end{figure}

\begin{table}[t]
\begin{center}
\begin{tabular}{ccccccccccccc}
\hline
lgdp & le & ss & fr & ge & gc & lgdp.le & lgdp.ss & lgdp.fr & le.ss & le.fr & ss.fr & $\pi_y(S)$ \\
\hline 
1 & 1 & 1 & 1 & 0 & 0 & 0 & 0 & 0 & 1 & 0 & 0 & 1.000 \\
1 & 1 & 1 & 1 & 0 & 0 & 0 & 0 & 0 & 1 & 0 & 1 & 0.213 \\
1 & 1 & 1 & 1 & 0 & 0 & 0 & 0 & 0 & 1 & 1 & 0 & 0.186 \\
1 & 1 & 1 & 1 & 0 & 0 & 0 & 1 & 0 & 1 & 0 & 0 & 0.181 \\
1 & 1 & 1 & 1 & 0 & 0 & 0 & 1 & 1 & 1 & 0 & 0 & 0.155 \\
1 & 1 & 1 & 1 & 0 & 0 & 0 & 1 & 0 & 1 & 0 & 1 & 0.130 \\
1 & 1 & 1 & 1 & 0 & 0 & 1 & 0 & 0 & 1 & 0 & 0 & 0.130 \\
1 & 1 & 1 & 1 & 1 & 0 & 0 & 0 & 0 & 1 & 0 & 0 & 0.123 \\
1 & 1 & 1 & 1 & 0 & 1 & 0 & 0 & 0 & 1 & 0 & 0 & 0.118 \\
1 & 1 & 1 & 1 & 0 & 0 & 0 & 1 & 0 & 1 & 1 & 0 & 0.110 \\
\hline 
\end{tabular}
\end{center}
\caption{Contents of the IM's 95\% confidence set for the model structure in the happiness data illustration; columns 6--12 correspond to the pairwise interactions.  The 0s and 1s in the first 12 columns determine the contests of the model structure $S$.}
\label{table:happy}
\end{table}

\section{Conclusion} 
\label{sec:conclusion}

This paper presents a new framework for provably reliable uncertainty quantification about a model's structure. In particular, we summarized the existing partial-prior IM framework which allows practitioners to accommodate any level of prior knowledge they have about their problem of interest, and includes the frequentist and Bayesian cases of no prior knowledge and exact prior knowledge as special cases. Previous work focused on uncertainty quantification about the model parameters for a given model structure. We introduced a new parameter for the model structure, constructed the marginal IM for it, and discussed its theoretical reliability properties---in particular, the marginal IM confidence sets for the model structure achieve a certain finite-sample coverage probability guarantee. 

There are a number of limitations associated with the proposed method, which naturally suggest open questions to be answered and directions to pursue as part of our future research. One main limitation is that computation of the IM contour can be expensive for even moderate $p$ as $2^p$ models need to be evaluated. In this paper, we used the {\tt leaps} package to efficiently calculate the model-specific residual sums of squares needed for estimating the IM contour at each of the $2^p$ models, and then applied the extension principle to formulate the marginal contour of the model size. It has recently come to our attention that \cite{bertsimas2016best} reformulated the best subset selection problem for linear regression as a mixed integer optimization problem, and demonstrated that they can handle $n$ in the 1000s and $p$ in the 100s in minutes with provable optimality; and near optimal solutions when $p > n$. This reformulation could be used to construct a profile-based IM for model structure that would be computationally feasible, and may be more efficient than the marginal contour for model structure constructed here. 
In addition to the computational challenges, the use of possibility contours for quantifying the prior information available is unfamiliar territory---not because translating prior information into a mathematical form is unusual, but because of the mathematical properties of the possibility contour. In particular, the requirement that the contour attain the value of one for at least one point in the space makes comparison between points in the space difficult. For example, constructing a prior contour that had small enough tails to appropriately penalize larger models (and in the extreme kill all covariates except for the intercept) required significant trial-and-error on simulated data not discussed here. The practitioner cannot be expected to perform this trial-and-error as it would invalidate the theoretical guarantees of the IM. As such, more effort is needed to simplify the elicitation process and help make users more comfortable in setting the prior contour from the outset.

\section*{Acknowledgments}

This work is partially supported by the U.S.~National Science Foundation, DMS--2412628.

\appendix

\section{Proofs from the paper}
\label{A:properties}

\begin{proof}[Proof of Theorem~\ref{thm:valid}]
This strong validity result is closely related to Theorem~1 in \citet{martin2022valid2}. The difference is that the present case involves marginalization over the nuisance,  structure-specific parameters, but this changes things only superficially.  Recall that $\overline\joint$ is defined as the upper envelope associated with the collection $\{\joint_\prior: \prior \in \credal\}$ so that
\[ \overline\joint\{ \pi_Y(S(\Theta) \leq \alpha \} = \sup_{\prior \in \credal} \joint_\prior\{ \pi_Y(S(\Theta)) \leq \alpha \}. \]
Next, recall that the contour $\pi$ is defined as 
\[ \pi_y(s) = \overline\joint\{ \eta_q(Y, S(\Theta)) \leq \eta_q(y,s)\} = \sup_{\prior' \in \credal} \joint_{\prior'} \{ \eta_q(Y, S(\Theta)) \leq \eta_q(y,s) \}, \]
where the use of $\prior'$ (instead of $\prior$) on the right-hand side is just to prepare for what comes next.  The key point is that, for any $\prior \in \credal$, 
\[ \pi_y(s) = \sup_{\prior' \in \credal} \joint_{\prior'} \{ \eta_q(Y, S(\Theta)) \leq \eta_q(y,s) \} \geq \joint_{\prior} \{ \eta_q(Y, S(\Theta)) \leq \eta_q(y,s) \}. \]
So, $\pi_y(s) \leq \alpha$ implies that 
\[ u_\prior(y,s) := \joint_{\prior} \{ \eta_q(Y, S(\Theta)) \leq \eta_q(y,s) \} \leq \alpha \quad \text{for all $\prior \in \credal$}. \]
Moreover, $u_\prior(Y, S(\Theta))$ is stochastically no smaller than $\unif(0,1)$ as a function of $(Y,\Theta) \sim \joint_\prior$ for each $\prior \in \credal$---this is by the general fact that plugging a random variable into its own distribution function returns a random variable stochastically no smaller than $\unif(0,1)$.  Therefore, 
\[ \overline\joint\{ \pi_Y(S(\Theta) \leq \alpha \} = \sup_{\prior \in \credal} \joint_\prior\{ \pi_Y(S(\Theta)) \leq \alpha \} \leq \sup_{\prior \in \credal} \joint_\prior\{ u_\prior(Y, S(\Theta)) \leq \alpha\} \leq \alpha, \]
which proves Theorem~\ref{thm:valid}'s claim.
\end{proof}

\begin{proof}[Proof of Corollary~\ref{cor:uvalid}]
This argument makes use of general lemma following Definition~3 in \citet{martin2022valid1}.  The claim is that the two events below are equivalent:
\begin{align*}
E_1 & = \{ (y,\theta): \text{$\uPi_y(H) \leq \alpha$ for some $H$ with $S(\theta) \in H$} \} \\
E_2 & = \{ (y,\theta): \pi_y(S(\theta)) \leq \alpha\}.
\end{align*}
On the one hand, it is easy to see that $E_1 \supseteq E_2$---just take $H = \{S(\theta)\}$.  On the other hand, if $(y,\theta) \in E_1$, so that $\uPi_y(H) \leq \alpha$ for some $H$ with $S(\theta) \in H$, then it follows by monotonicity of $\uPi_y$ that $\pi_y(S(\theta)) = \uPi_y(\{S(\theta)\}) \leq \uPi_y(H) \leq \alpha$, which implies $E_1 \subseteq E_2$.  Since $E_1$ and $E_2$ are equivalent, their $\overline\joint$-probabilities must be equal, which proves the claim of Corollary~\ref{cor:uvalid}. 
\end{proof}

\section{Computation-related details}
\label{A:computation}

Here we address the key question of how to evaluate the IM contour function defined in \eqref{eq:pi.s}.  A simple recasting of the general formula \eqref{eq:upper.joint} for probabilities with respect to the upper joint distribution for $(Y,\Theta)$ gives 
\begin{align*}
\pi_y(s) & = \overline\joint\bigl\{ \eta_q(Y,S(\Theta)) \leq \eta_q(y,s) \bigr\} \\
& = \sup_{\prior \in \credal} \int \underbrace{\prob_\theta\bigl\{ \eta_q(Y,S(\theta)) \leq \eta_q(y,s) \bigr\}}_{= g_{y,s}(\theta)} \, \prior(d\theta).
\end{align*}
The latter expression clearly corresponds to an {\em upper expectation} of $g_{y,s}(\Theta)$ with respect to the credal set of prior distributions $\prior$ for $\Theta$.  Since that function $g_{y,s}$ is bounded, and the credal set is induced by a possibility measure, there are alternative ways to express this upper expectation that are computationally simpler and tractable.  Indeed, Proposition~7.14 in \citet{lower.previsions.book} implies that $\pi_y$ can be equivalently expressed as 
\begin{equation}
\label{eq:im.contour}
\pi_y(s) = \int_0^1 \Bigl[ \sup_{\theta: q\{S(\theta)\} > w} \prob_\theta\bigl\{ \eta_q(Y,S(\theta)) \leq \eta_q(y,s) \bigr\} \Bigr] \, dw, \quad s \in \SS. 
\end{equation}
So, computation of the contour $\pi_y(s)$ involves an integral of a maximized probability.  The probability itself is easy to evaluate via Monte Carlo; the challenge is that the optimization requires scanning over the model structure space.  Fortunately, the {\tt leaps} package in \verb|R| \citep{leaps} makes this relatively easy in the linear regression setting under consideration here.  Since the prior contour $q$ only depends on the model parameter $\theta$ via the size/complexity of its inherent structure $s=S(\theta)$, and there are only finitely many such structures and sizes, the integrand in \eqref{eq:im.contour}, i.e., 
\[ w \mapsto \sup_{\theta: q\{S(\theta)\} > w} \prob_\theta \bigl\{ \eta_q(Y,S(\theta)) \leq \eta_q(y, s) \bigr\}, \]
is piecewise constant.  This makes the outer integration straightforward.  So, the remaining challenge is evaluating the finitely-many suprema in the above display.  

Of course, the above optimization can be done via a grid-search procedure, but this adds significant computational burden.  A key observation is that, for a given structure $S(\theta)=s$, the distribution of $\eta_q(Y,s)$ will be stochastically smaller---hence making the above probability larger---when the coefficients used to generate $Y$ are close or even equal to 0.  The intuition is that $\eta_q(Y,s)$ is roughly a ratio of scaled residual sums of squares.  First, the numerator involves a residual sum of squares corresponding to data being generated according to the model being fit, so this will not depend on the true regression coefficients.  Second, the denominator is a maximum over all models, and when we generate data with small/zero values for the regression coefficients, we expect roughly the same quality of fit regardless of whether the model is simple or complex.  In that case, the penalty term will end up deciding where the maximum in the denominator of $\eta_q(Y,s)$ is attained and, in particular, that maximum will be larger than the numerator, thus tending to make the ratio $\eta_q(Y,s)$ relatively small.  Moreover, since we are dealing with a ratio and since the error standard deviation is a scale parameter, we do not expect this to have strong/any effect.

Figure~\ref{fig:etacdf} shows the distribution function of $\eta_q(Y,s)$ for two different choices of $s$, in a simple case with only $p=3$ covariates, and for different choices on the magnitudes of the regression coefficients used in generating $Y$.  (In both cases, the error variance used in generating $Y$ is unity, and there was some not-severe variation across different values of the error variance, but the conclusion---that coefficients equal to 0 make $\eta_q(Y,s)$ smaller---remained unchanged.) The key observation is that the distribution function of $\eta_q(Y,s)$ is pointwise largest when the values of the regression coefficients are all 0 (black line), which confirms the above intuition. 

\begin{figure}[t]
\begin{center}
\subfigure[$s=(1, 0, 1)$]{\scalebox{0.6}{\includegraphics{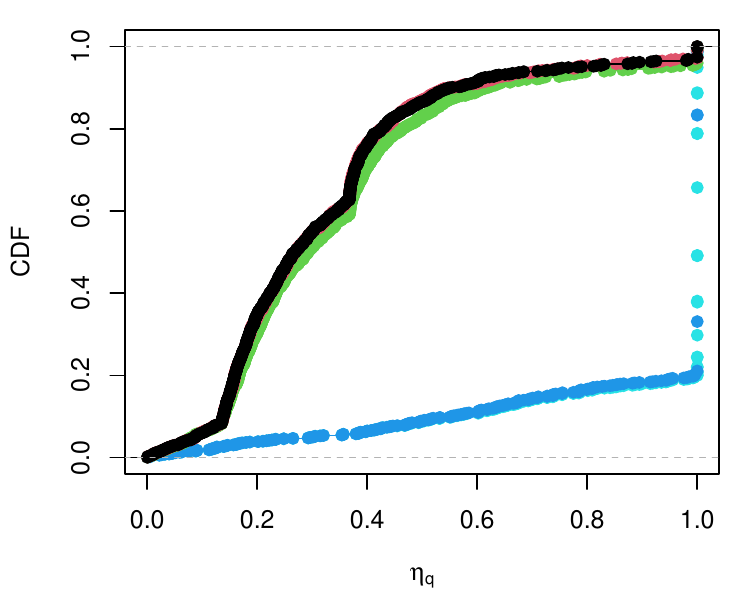}}}
\subfigure[$s=(0, 0, 1)$]{\scalebox{0.6}{\includegraphics{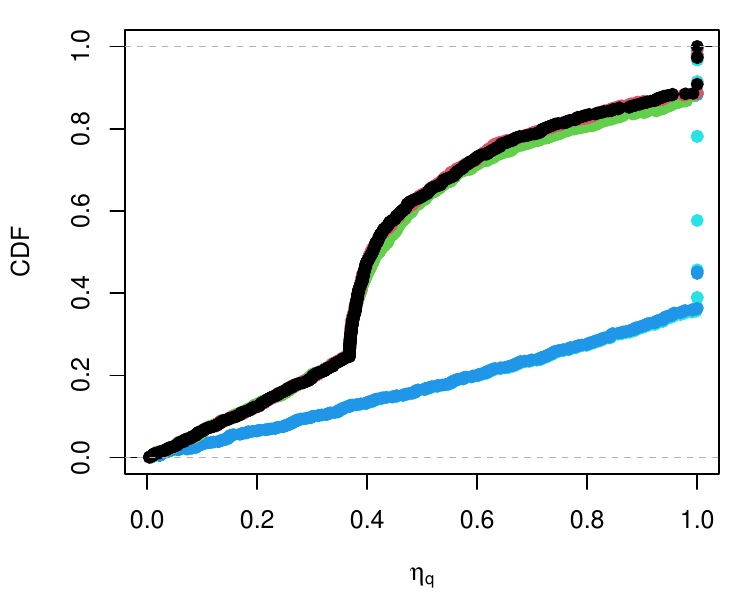}}}
\end{center}
\caption{Empirical distribution functions of $\eta_q(Y,s)$ for two different $s$ and several different sets of true regression coefficients (distinguised by color).  Black line is the distribution function corresponding to regression coefficients all equal to 0.}
\label{fig:etacdf}
\end{figure}

In light of the above observation, we carry out the IM computations with the integrand in the previous display replaced by 
\begin{equation}
\label{eq:approx}
w \mapsto \max_{r \in \SS: q(r) > w} \prob\{ \eta_q(Z, r) \leq \eta_q(y,s)\},
\end{equation}
where the probability ``$\prob$'' assumes that $Z$ is filled with iid standard normal random variables.  This is relatively easy to evaluate via Monte Carlo; indeed, the distribution of $\eta_q(Z,r)$, for each model structure $r$, can be precomputed independent of the data $y$ and of the posited structure $s$, after which the evaluation of the contour function $s \mapsto \pi_y(s)$ is very easy.  It is important to emphasize that this is indeed an approximation and, in particular, that it results in an under-estimate of the contour $\pi_y(s)$ for which the theoretical properties in Section~\ref{SS:valid} are established.  Our experience suggests that the approximation error is insignificant, but there is nonetheless some error.  

A more rigorous justification for the conservativeness of a similar approximation to (\ref{eq:approx}), using the profile-based marginal IM possibility contour, is as follows.
\begin{align*}
\eta_q^\text{\sc pr}(y,s) & = \frac{L_y(s,\hat\theta_s) \, q(s)}{\max_{r \in \SS} L_y(r,\hat\theta_r) \, q(r)}, \quad s \in \SS \\
& = \frac{\|y - (x_s^\top x_s)^{-1} x_s^\top y\|^{-n} \, q(s)}{\max_{r \in \SS} \|y - (x_r^\top x_r)^{-1} x_r^\top y\|^{-n} \, q(r)} \\
& = \frac{\|(I_n - H_s)y\|^{-n} \, q(s)}{\max_{r \in \SS} \|(I_n - H_r)y\|^{-n} \, q(r)},
\end{align*}
where $H_s := x_s (x_s^\top x_s)^{-1} x_s^\top$.  Consider the profile-based marginal IM possibility contour,
\[
\pi_y^\text{\sc pr}(s) = \sup_{\theta_s}\prob_{s,\theta_{s}}\{ \eta_q^\text{\sc pr}(Y,s) \leq \eta_q^\text{\sc pr}(y,s) \}.
\]
The $\sup_{\theta_s}$ calculation can be avoided with the following approximation.  In the probability statement above, $Y = x_s\phi_s + \lambda^{1/2} Z$ with $Z \sim {\sf N}_n(0,I_{n})$, and so for any $r \in \SS$,
\[
\|(I_n - H_r)Y\|^2 - \|(I_n - H_r)Z\|^2 \lambda = \|(I_n - H_r)x_s\phi_s\|^2 + 2\lambda^{1/2} \phi'_s x'_s(I_n - H_r) Z.
\]
Note that in the case $r = s$, the right side is zero, meaning that $Y$ in the numerator of $\eta_q^\text{\sc pr}(Y,s)$ can be replaced by $\lambda^{1/2}Z$, i.e., not depending on $\phi_s$.  

Next, define the event
\begin{align*}
A &:= \bigcap_{r\in\SS}\Big\{ \|\lambda^{1/2}(I_n - H_r)Z\|^{-n} \ge \|(I_n - H_r)Y\|^{-n} \Big\} \\
& \subseteq \bigg\{ \frac{1}{\max_{r \in \SS} \|\lambda^{1/2}(I_n - H_r)Z\|^{-n} \, q(r)} \le \frac{1}{\max_{r \in \SS} \|(I_n - H_r)Y\|^{-n} \, q(r)} \bigg\}.
\end{align*}
Then
\begin{align*}
\pi_y^\text{\sc pr}(s) & = \sup_{\theta_s}\prob_{s,\theta_{s}}\{ \eta_q^\text{\sc pr}(Y,s) \leq \eta_q^\text{\sc pr}(y,s) \} \\
& = \sup_{\theta_s}\bigg[\prob_{s,\theta_{s}}\{ \eta_q^\text{\sc pr}(Y,s) \leq \eta_q^\text{\sc pr}(y,s), A \} + \prob_{s,\theta_{s}}\{ \eta_q^\text{\sc pr}(Y,s) \leq \eta_q^\text{\sc pr}(y,s), A^c\} \bigg] \\
& \le \prob_{s}\{ \eta_q^\text{\sc pr}(Z,s) \leq \eta_q^\text{\sc pr}(y,s) \} + \sup_{\theta_s}\prob_{s,\theta_{s}}(A^c),
\end{align*}
where
\begin{align*}
\prob_{s,\theta_{s}}(A^c) & \le \sum_{r\in\SS}\prob_{s,\theta_{s}}\Big\{ \|\lambda^{1/2}(I_n - H_r)Z\|^{-n} < \|(I_n - H_r)Y\|^{-n} \Big\} \\
& = \sum_{r\in\SS\setminus\{s\}}\prob_{s,\theta_{s}}\Big\{ \|\lambda^{1/2}(I_n - H_r)Z\|^2 > \|(I_n - H_r)Y\|^2 \Big\} \\
& = \sum_{r\in\SS\setminus\{s\}}\prob\Big\{ -2\lambda^{1/2} \phi'_s x'_s(I_n - H_r) Z > \|(I_n - H_r)x_s\phi_s\|^2 \Big\} \\
& = \sum_{r\in\SS\setminus\{s\}} F_{\text{N}(0,1)}\Big\{ -\|(I_n - H_r)x_s\phi_s\| / (2\lambda^{1/2}) \Big\}.
\end{align*}
As long as $x$ is full rank, it can be expected that $\|(I_n - H_r)x_s\phi_s\| = O(n)$, implying that $\prob_{s,\theta_{s}}(A^c)$ is close to zero for large signal-to-noise ratio and/or for large $n$.  Thus, $\pi_y^\text{\sc pr}(s)$ can be conservatively approximated by $\prob_{s}\{ \eta_q^\text{\sc pr}(Z,s) \leq \eta_q^\text{\sc pr}(y,s) \}$, ensuring validity.

\bibliographystyle{apalike}
\bibliography{im_ref}

\end{document}